\newtheorem{theorem}{Theorem}[section]
\newtheorem{lemma}[theorem]{Lemma}
\newtheorem{corollary}[theorem]{Corollary}
\newenvironment{equation-w-ref}[1][N/A]{
    \refstepcounter{equation}
    \label{#1-back-ref}
    \equation \tag{\mbox{\hyperref[#1]{$\Box$}} \theequation}}
    {\endequation}
\newenvironment{align-w-ref}[1][N/A]{
    \refstepcounter{equation}
    \label{#1-back-ref}
    \align \tag{\mbox{\hyperref[#1]{$\Box$}} \theequation}}
    {\endalign}
\newenvironment{theorem-w-back-ref}[1][N/A]{
    
    \refstepcounter{theorem}
    \label{#1}
    \begin{trivlist}
        \item[\hskip \labelsep {\bfseries Theorem \hyperref[#1-back-ref]{\thetheorem}}]
        \itshape
    }{\end{trivlist}
    
}
\newenvironment{lemma-w-back-ref}[1][N/A]{
    
    \refstepcounter{theorem}
    \label{#1}
    \begin{trivlist}
        \item[\hskip \labelsep {\bfseries Lemma \hyperref[#1-back-ref]{\thetheorem}}]
        \itshape
    }{\end{trivlist}
    
}
\newenvironment{corollary-w-back-ref}[1][N/A]{
    
    \refstepcounter{theorem}
    \label{#1}
    \begin{trivlist}
        \item[\hskip \labelsep {\bfseries Corollary \hyperref[#1-back-ref]{\thetheorem}}]
        \itshape
    }{\end{trivlist}
    
}
\newenvironment{proof}[1][Proof]{
    \begin{trivlist}
        \item[\hskip \labelsep \itshape {\bfseries #1}] \mbox{}
    }{\end{trivlist}
}
\newenvironment{example}[1][Example:]{
    \begin{trivlist}
        \item[\hskip \labelsep {\bfseries #1}]
    }{\end{trivlist}
}
\newenvironment{remark}[1][Remark:]{
    \begin{trivlist}
        \item[\hskip \labelsep \textit{#1}]
    }{\end{trivlist}
}
\newcounter{lemma-ref}
\newenvironment{lemma-ref}[1][N/A]{
    \refstepcounter{lemma-ref}
    \begin{trivlist}
        \item[\hskip \labelsep {\bfseries Lemma \ref{#1}}]
        \itshape
    }{\end{trivlist}
}
\newcounter{theorem-ref}
\newenvironment{theorem-ref}[1][N/A]{
    \refstepcounter{theorem-ref}
    \begin{trivlist}
        \item[\hskip \labelsep {\bfseries Theorem \ref{#1}}]
        \itshape
    }{\end{trivlist}
}
\newcounter{corollary-ref}
\newenvironment{corollary-ref}[1][N/A]{
    \refstepcounter{corollary-ref}
    \begin{trivlist}
        \item[\hskip \labelsep {\bfseries Corollary \ref{#1}}]
        \itshape
    }{\end{trivlist}
}
\newcommand{\niton}{\not\owns}
\newcommand{\binomempty}{\genfrac{}{}{0pt}{}}
\begin{document}

\preprint{abs/1703.09030}

\title{Shift-Symmetric Configurations in Two-Dimensional Cellular Automata:\\ Irreversibility, Insolvability, and Enumeration}

\author{Peter Banda}
\email{peter.banda@uni.lu}
\affiliation{Luxembourg Centre For Systems Biomedicine\\University of Luxembourg, Esch-sur-Alzette, L-4362, Luxembourg \looseness=-1}

\author{John Caughman}
\email{caughman@pdx.edu}
\affiliation{Department of Mathematics and Statistics\\Portland State University, Portland, OR, 97201, USA}

\author{Martin Cenek}
\email{cenek@up.edu}
\affiliation{Shiley School of Engineering\\University of Portland, Portland, OR, 97203, USA}

\author{Christof Teuscher}
\email{teuscher@pdx.edu}
\affiliation{Department of Electrical and Computer Engineering\\Portland State University, Portland, OR, 97201, USA}

\date{June 26, 2019}

\begin{abstract}



The search for symmetry as an unusual yet profoundly appealing phenomenon, and the origin of regular, repeating configuration patterns have long been a central focus of complexity science and physics. To better grasp and understand symmetry of configurations in decentralized toroidal architectures, we employ group-theoretic methods, which allow us to identify and enumerate these inputs, and argue about irreversible system behaviors with undesired effects on many computational problems. The concept of so-called \textit{configuration shift-symmetry} is applied to two-\mbox{dimensional} cellular automata as an ideal model of computation. Regardless of the transition function, the results show the universal insolvability of crucial distributed tasks, such as leader election, pattern recognition, hashing, and encryption. By using compact enumeration formulas and bounding the number of shift-symmetric configurations for a given lattice size, we efficiently calculate the probability of a configuration being shift-symmetric for a uniform or density-uniform distribution. Further, we devise an algorithm detecting the presence of shift-symmetry in a configuration.

Given the resource constraints, the enumeration and probability formulas can directly help to lower the minimal expected error and provide recommendations for system's size and initialization. Besides cellular automata, the shift-symmetry analysis can be used to study the non-linear behavior in various synchronous rule-based systems that include inference engines, Boolean networks, neural networks, and systolic arrays. 

\end{abstract}

\keywords{configuration shift-symmetry, toroidal translation invariance, two-dimensional cellular automata, group theory, pattern formation, irreversibility, insolvability, enumeration, symmetry detection,  prime factorization, prime orbit, mutually-independent generators, leader election, spontaneous symmetrization hypothesis} 


\maketitle

\begin{quotation}
Symmetry is a synonym for beauty and rarity, and generally perceived as something desired. In this paper we investigate an opposing side of symmetry and show how it can irreversibly \textit{corrupt} a computation, and restrict a system's dynamics and its potentiality. We demonstrate this fundamental phenomenon, which we call \textit{configuration shift-symmetry}, affecting many crucial distributed tasks on the simplest grid-like synchronous system of cellular automaton. We show how to count these symmetric inputs depending on a lattice size and its prime factorization, how likely they are encountered, and how to detect them.
\end{quotation}

\section{Introduction}
\label{sec:introduction}

The structure of the computational rules that result in regular, repeating system configurations has been studied by many, yet the question of how the natural and engineered system organize into symmetric structures is not completely known. To understand the role of symmetry of the starting configurations (the inputs), how they are processed (the machine), and produce the final configurations with desired properties (the outputs) we use a cellular automata (CA) as a simple distributed model of computation. First introduced by John von Neumann, CAs were instrumental in the exploration of logical requirements for machine self-replication and information processing in nature \cite{Neu66}. Despite having no central control and limited communication among the components, CAs are capable of universal computation and can exhibit various dynamical regimes \cite{Berlekamp1982,  Smith71, Wolfram1984}. As one of the structurally simplest distributed systems, CAs have become a fundamental model for studying complexity in its purist form \cite{CMD03,Wol86}. Subsequently, CAs have been successfully employed in numerous research fields and applications, such as modeling artificial life \cite{Langton90}, physical equations \cite{Fredkin82,Vichniac84}, and social and biological simulations \cite{Ermentrout93,Sales1997,Kal2015,Santos2001}.

The CA input configurations define a language that is processed by the machine. Exploring the structural symmetries of the input language not only translates to an efficient machine implementation, but allows us to argue about a problem insolvability and the irreversibility of computation.


In this paper, we explore the concept of shift-symmetry and revisit a well-known fact that any standard CA maintains a configuration shift-symmetry due to uniformity and synchronicity of cells. We show that once a system reaches a symmetric, i.e., spatially regular configuration, the computation will never revert from this attractor and will fail to solve all problems that require asymmetric solutions. As a result, the number of symmetries of the dynamical system is never decreasing. When a configuration slips to a symmetric, repeating pattern the configuration space of the CA irreversibly folds, causing a permanent regime ``shift." Consequently, a non-symmetric solution cannot be reached from a shift-symmetric configuration. A more general implication is that a configuration is unreachable (even if symmetric) if a source configuration has a symmetry not contained in the target. Non-symmetric tasks, such as leader election or pattern recognition, i.e., tasks expecting a final configuration to be non-symmetric, are therefore principally insolvable, since for any lattice size there always exist input configurations that are symmetric. As a hypothesis we also briefly discuss the eventual gradual increase of system's symmetries at the end of this paper, however, without any strong claims or proofs attached.

Using basic results from group theory and elementary combinatorics, we develop three progressively more efficient enumeration techniques based on mutually independent generators to answer the question of how many potential shift-symmetric configurations there are in any given two-dimensional CA lattice. As a side product, we demonstrate that the shift-symmetry is closely linked to prime factorization. We introduce and prove lower and upper bounds for the number of shift-symmetric configurations, where the lower bound (local minima) is tight and reached only for prime lattice sizes. We enumerate shift-symmetric configurations for a given lattice size and number of active cells.

Finally, we derive a formula and bounds for the probability of selecting shift-symmetric configuration randomly generated from a uniform or density-uniform distribution. We develop a shift-symmetry detection algorithm and prove its worst and average-case time complexities.

\subsection{Applications}
\label{subsec:applications}

All the formulas and proofs presented in this paper assume a two-dimensional CA with any number of states, and arbitrary uniform transition and neighborhood functions, which makes our results widely applicable.

Knowing the number of shift-symmetric configurations, we can directly determine the probability of selecting a shift-symmetric configuration by chance. This probability then equals an error lower bound or expected insolvability for any non-symmetric task. As we show, the insolvability caused by shift-symmetry rapidly decreases asymptotically with the lattice size for a uniform distribution. For instance, the probability is $0.5$ for a $2\times2$ lattice, but drops to around $2.7\times10^{-15}$ for a $10\times10$ lattice. Since the number of shift-symmetric configurations heavily depends on the prime factorization of the lattice size, the probability function is non-monotonously decreasing.  
To minimize the occurrence of shift-symmetries for uniform distribution, we generally recommend using prime lattices, or at least avoiding even ones. On the other hand, the probability for a density-uniform distribution is quite high, regardless of primes; it is around $10^{-3}$, even for a $45\times45$ lattice.

The distribution error-size constraints have important consequences for designing robust and efficient computational procedures for many crucial distributed problems, such as leader election \cite{Smith71,Banda11}, pattern recognition \cite{Rosin2006}, edge detection \cite{Slatnia2007}, image translation \cite{ioannidis2014}, convex hull/minimum bounding rectangle \cite{Cenek11}, hashing or collision resolution for associative memory \cite{chowdhury1995}, encryption \cite{wang2013}, and random number generation \cite{tomassini2000}. For these tasks an expected final configuration, e.g., reproduction of a certain two-dimensional image, is frequently non-shift-symmetric, and therefore unreachable from a symmetric configuration. Alternatively, an expected configuration can be unreachable even if it is shift-symmetric, which occurs when the vector space of its generating vectors (shifts) do not contain all the shifts of an initial configuration.

Practical implications of these properties include performance degradation of systolic CPU arrays and nanoscale multicore systems \cite{Zhirnov2008}. Our results span to the hardware implementations of synchronous CAs with \mbox{FPGA}s, used, e.g., for traffic signals control \cite{kalogeropoulos2013cellular}, random number generation \cite{shackleford2002fpga}, and reaction-diffusion model \cite{ishimura2015fpga}; and spintronics, where computation is achieved by coupled oscillators \cite{boehme2013challenges,vodenicarevic2016synchronization}. Also, current efforts to implement two or three-dimensional cellular automata using DNA tiles \cite{gu2009dynamic,wei2012complex} and/or gel-separated compartments in so-called \textit{gellular automata} \cite{hagiya2014dna,kawamata2016discrete} may face problems related to configuration shift-symmetry if a synchronous update is considered.



\subsection{Related Work}
\label{subsec:related-work}

In their seminal work, Packard and Wolfram \cite{packard1985} identified the importance of symmetry and showed that the global properties of a CA emerge as a function of the transition function's reflective and rotational symmetries. The fundamental algebraic properties of additive and non-additive CAs were studied by Martin \textit{et al.} \cite{martin1984}, who demonstrated that in simple cases there is a connection between the global behavior and the structure of the configuration transitions. Wolz and deOliveira \cite{wolz2008} exploited the structure and symmetry in the transition table to design an efficient evolutionary algorithm that found the best results for the density classification and parity problems. Marquez-Pita \textit{et al.} \cite{Marques2008, Marques2011} used a brute-force approach to find similar input configurations that produce the same outputs. Their results are a compact transition function re-description schema that use wild-cards to represent the many-to-one computation rules on a majority problem. Bagnoli \textit{et al.} \cite{Bagnoli2012} explored different methods of master-slave synchronization and control of totalistic cellular automata. A number of computation-theoretic results for CA were summarized by Culik II \textit{et al.} \cite{CulikII1990}, who investigated CAs through the eyes of set theory and topology. The effect of symmetry on the complexity of Boolean functions was thoroughly researched by Babai \textit{et al.} \cite{babai1992}. Pippenger \cite{pippenger1994} studied translation functions capable of correcting CA configurations under a specific kind of symmetry: rotation (an isometry with a corner coordinate fixed). 

Besides the symmetry of \textit{transition functions} and the design of transition functions resulting in regular or synchronized patterns, a number of contributions to the theoretical CA literature have addressed the general structure and implications of shift-symmetric \textit{configurations} also called translation invariant, or simply periodic, as we do here. This problem has been studied primarily in the context of group theory \cite{ceccherini2010}, through a general approach using stabilizers, group actions, and Bernoulli shifts. In particular, the work by Castillo-Ramirez and Gadouleau \cite{castillo2016} for example, approaches the problem using M\"{o}bius inversion of the subgroup lattice. Our derivation differs from their work by leveraging the affordances of specifying in advance that our symmetries are restricted to shift-symmetries, i.e., the specific case of Cartesian powers of cyclic groups in two dimensions, and proceeding inductively, which allows us to derive stronger results for the subproblem of our interest. In particular, we provide more efficient and executable enumeration formulas in an algorithmic sense and a better lower bound for the number of aperiodic configurations. Note that Castillo-Ramirez and Gadouleau improved the bound found by Gao \textit{et al.} \cite{gao2016} 

Another recent article \cite{ethier2013} explores similar questions regarding the number of distinct binary configurations of toroidal arrays in the presence of rotational and reflection symmetries. For our purposes, the ratio of symmetric to non-symmetric configurations is of greater interest than a simple enumeration of the total. Accordingly, our work differs from theirs by our focus on enumerating how many of these configurations possess some nontrivial symmetry (and, additionally, we do not wish to be limited to the binary case of an alphabet of size 2).

The concept of symmetry in number theory has been applied to so-called tapestry design and periodic forests \cite{apsimon1970,miller1970}, which relates to CA configurations. However, the triangular topology and geometric branching differs from the discrete toroidal Cartesian topology typically used for CAs. 



One of our main motivations is the pioneering work of Angluin \cite{Angluin1980}, who noticed that a ring containing anonymous components (processors), which are all in the same state, will never break its homogeneous configuration and elect a leader. This intuitive observation is, in fact, a special case of the concept of configuration shift-symmetry for CAs. We will show that Angluin's homogeneous state, which corresponds to a configuration of all zeros or all ones in a binary CA, is the most symmetric configuration for a given lattice size.


The concept of shift-symmetry is related to the notion of regular domains in computational mechanics \cite{HC92,CH93,Rupe2018,Rupe2018b}. A shift-symmetric configuration is essentially a (global) regular domain spread to a full lattice. Although we cannot apply our results directly to regular domains at the level of sub-configurations, because we pay no attention to local symmetries and non-cyclic and non-regular borders, the number of possible shift-symmetric configurations gives at least an upper bound on the number of possible regular domains. 


In our previous work \cite{Banda2015} we proved that configuration shift-symmetry, along with loose-coupling of active cells, prevents a leader from being elected in a one-dimensional CA \cite{Banda11}. The \textit{leader election} problem, first introduced by Smith \cite{Smith71}, requires processors to reach a final configuration where exactly one processor is in a leader state (one) and all others are followers (zero). Leader election is representative of a problem class where the solution is an asymmetric, non-homogeneous, transitionally and rotationally invariant system configuration. A final fixed-point configuration is asymmetric, since it contains only one processor in a leader state. Clearly, leader election and symmetry are \textit{enemies}, and, in fact, leader election is often called symmetry-breaking. 

To enumerate shift-symmetric configurations for a one-dimensional case \cite{Banda2015} we employed only basic combinatorics. Here, in order to span to two dimensions, we extend our enumeration machinery to include some basic concepts from group theory and we rely heavily on the notion of independent generators. We show that the insolvability caused by configuration symmetry extends beyond leader election to a whole class of non-symmetric problems.


\subsection{Model}
\label{sec:ca-definition}

By definition, a CA \cite{Cod68} consists of a lattice of $N$ components, called {\em cells},
and a {\em state set} $\Sigma$.
A state of the cell with index $i$ is denoted $s_{i} \in \Sigma$.
A {\em configuration} is then a sequence of cell states:
\begin{equation}
    \mathbf{s} = (s_{0}, s_{1},\dots, s_{N - 1}).
\end{equation}
Given a topology for the lattice and the number of neighbors $b$, a {\em neighborhood} function $\eta: \mathbb{N}\times\Sigma^{N} \to \Sigma^{b}$ maps any pair $(i, \mathbf{s})$ to the $b$-tuple $\eta_i(\mathbf{s})$ of cells' states that are accessible (visible) to cell $i$ in configuration $\mathbf{s}$. Note that each cell is usually its own neighbor.

The transition rule \( \phi : \Sigma^{b} \to \Sigma \) is applied in parallel to each cell's neighborhood, resulting in the synchronous update of all of the cells' states \( s_{i}^{t+1} = \phi(\eta_{i}(\mathbf{s})^{t}) \).
The transition rule is represented either by a transition table, also called a look-up table, or a finite state transducer \cite{Hor00}. Here we focus exclusively on {\em uniform} CAs, where all cells share the same transition function. 
 The {\em global transition rule} \( \Phi: \Sigma^{N} \to
\Sigma^{N} \) is defined as the transition rule with the scope over all configurations: $\mathbf{s}^{t+1} = \Phi(\mathbf{s}^{t})$.

\begin{figure}[t!]
    \begin{center}
        \includegraphics[width=0.36\textwidth]{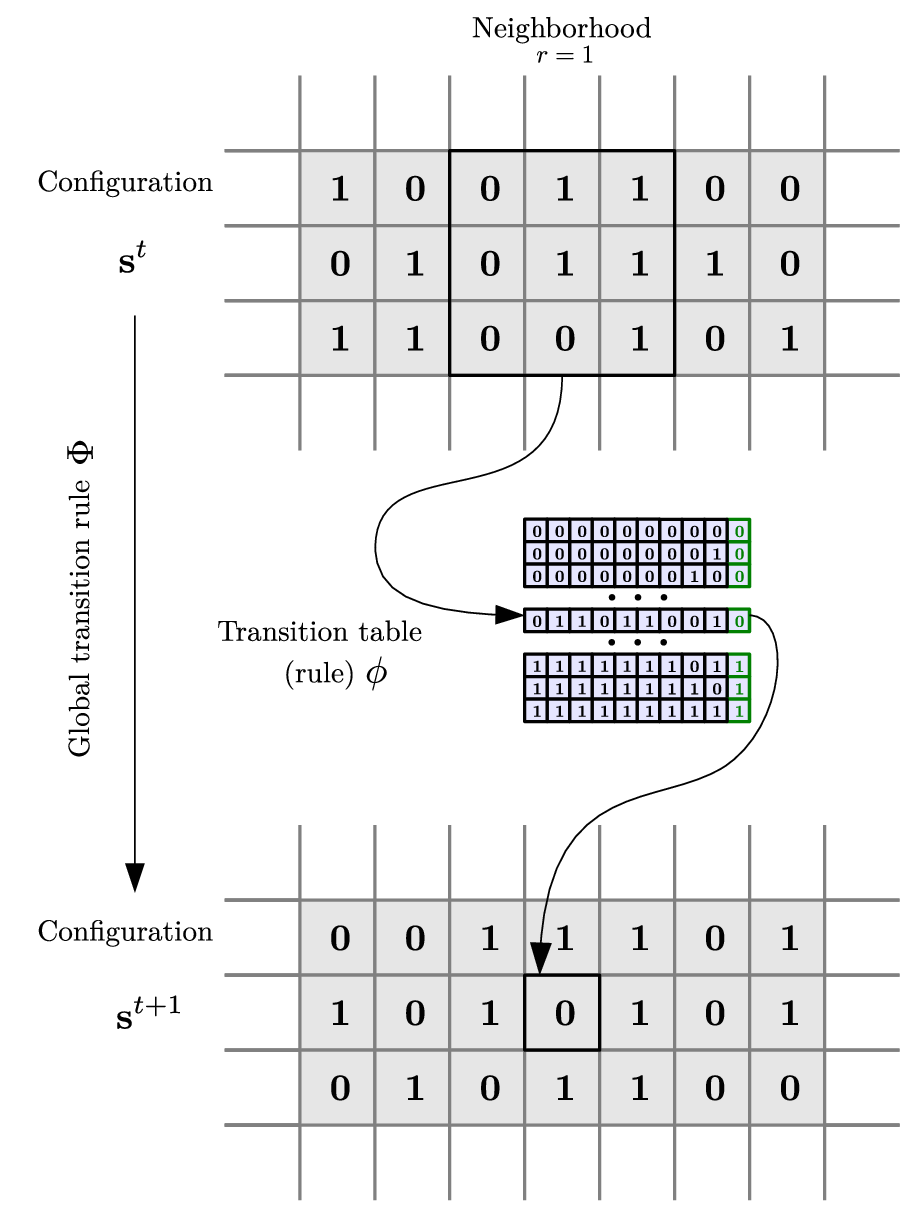}
     \end{center}
        \caption{Schematic of the configuration update for a binary two-dimensional CA, where the first nine bits in the transition table represent the flattened Moore neighborhood and the last bit is the output.}
        \label{fig:2d_ca-update}
\end{figure}

\begin{figure}[t!]
    \begin{center}
    \subfigure {
        \parbox{0.102\textwidth}{
            \includegraphics[width=0.102\textwidth,frame]{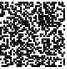} \\
            $t_{0}$
        }
    }
    \subfigure {
        \parbox{0.102\textwidth}{
            \includegraphics[width=0.102\textwidth,frame]{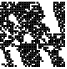} \\
            $t_{40}$
        }
    }
    \subfigure {
        \parbox{0.102\textwidth}{
            \includegraphics[width=0.102\textwidth,frame]{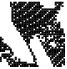}\\
            $t_{80}$
        }
    }
    \subfigure {
        \parbox{0.102\textwidth}{
            \includegraphics[width=0.102\textwidth,frame]{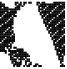}\\
            $t_{120}$
        }
    }
    \subfigure {
        \parbox{0.102\textwidth}{
            \includegraphics[width=0.102\textwidth,frame]{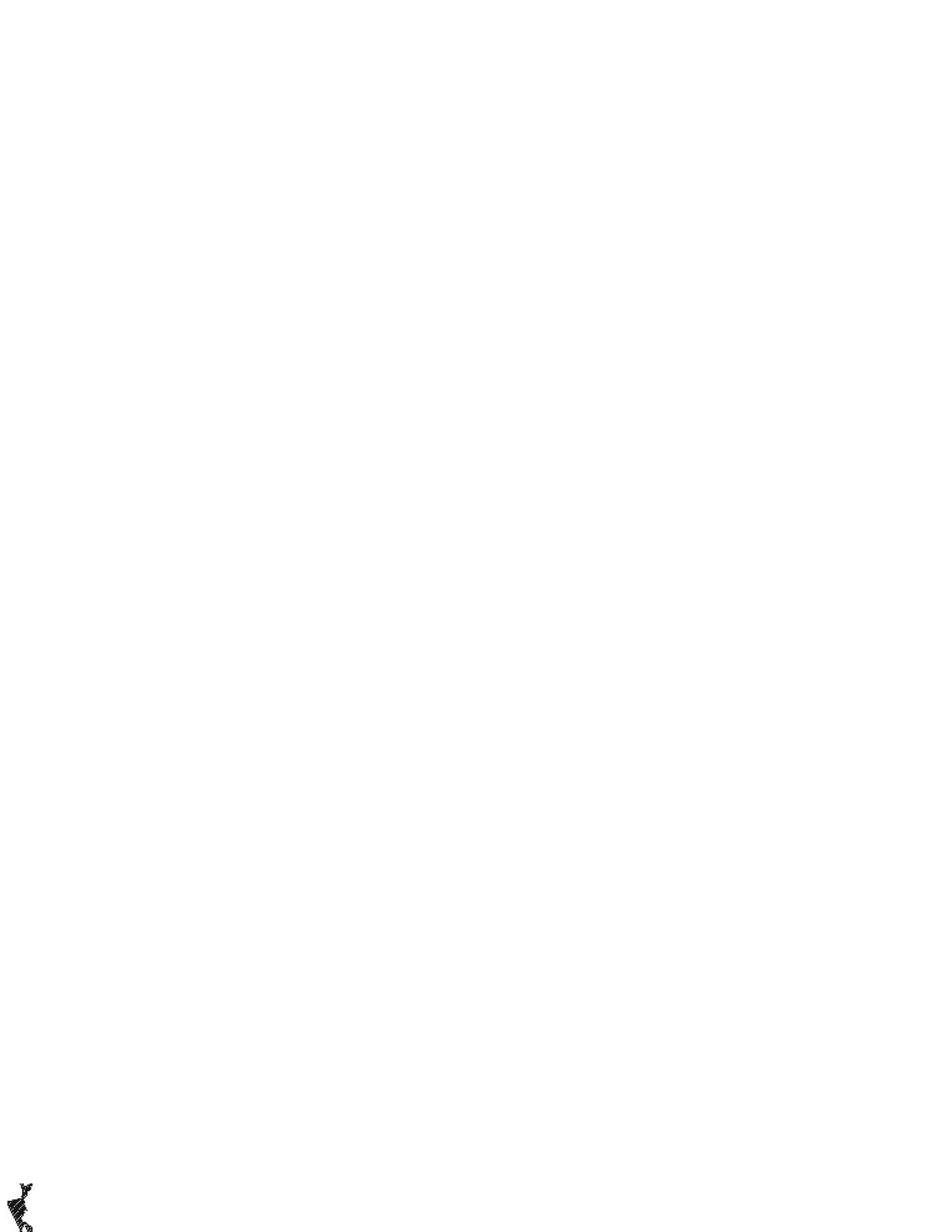}\\
            $t_{160}$
        }
    }
    \subfigure {
        \parbox{0.102\textwidth}{
            \includegraphics[width=0.102\textwidth,frame]{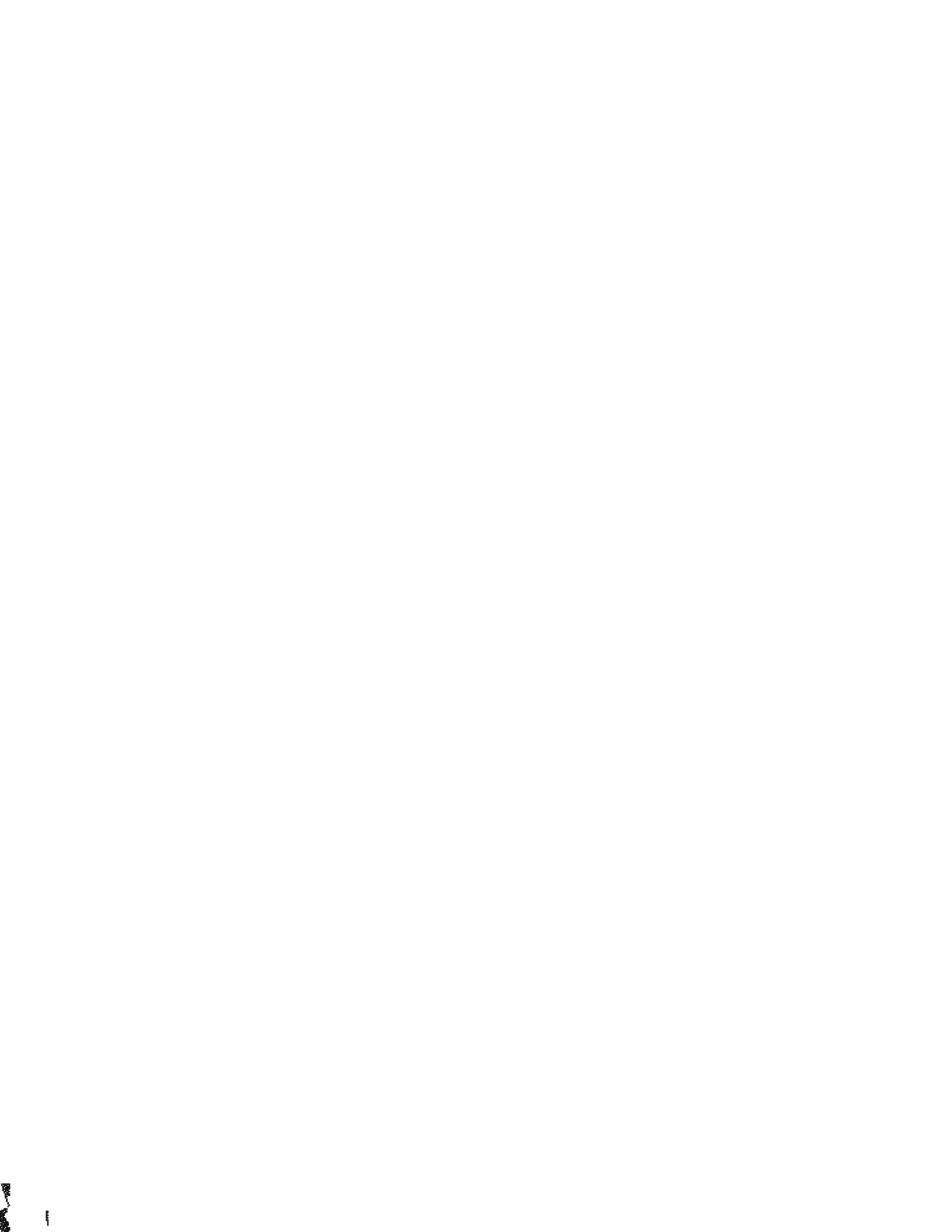}\\
            $t_{180}$
        }
    }
    \subfigure {
        \parbox{0.102\textwidth}{
            \includegraphics[width=0.102\textwidth,frame]{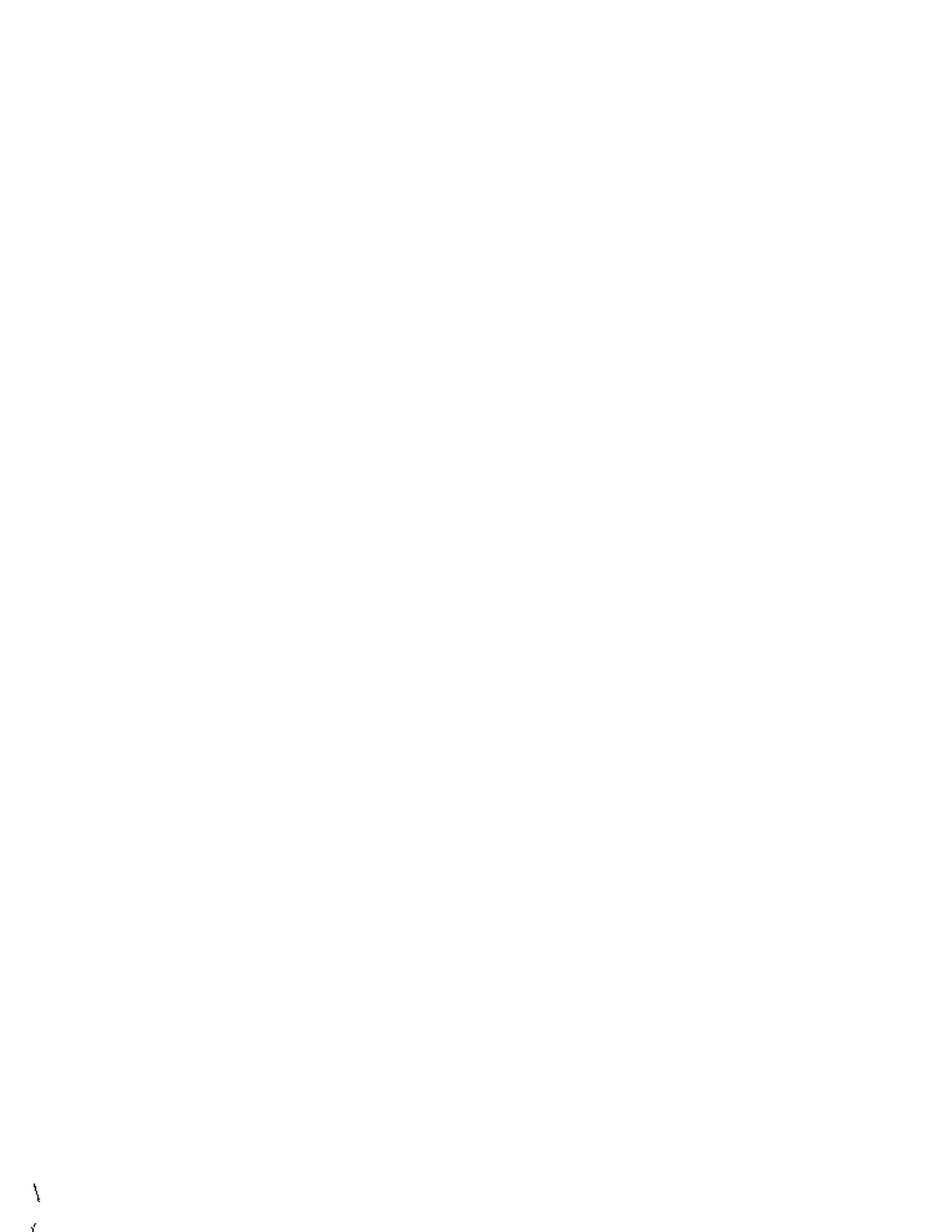}\\
            $t_{200}$
        }
    }
    \subfigure {
        \parbox{0.102\textwidth}{
            \includegraphics[width=0.102\textwidth,frame]{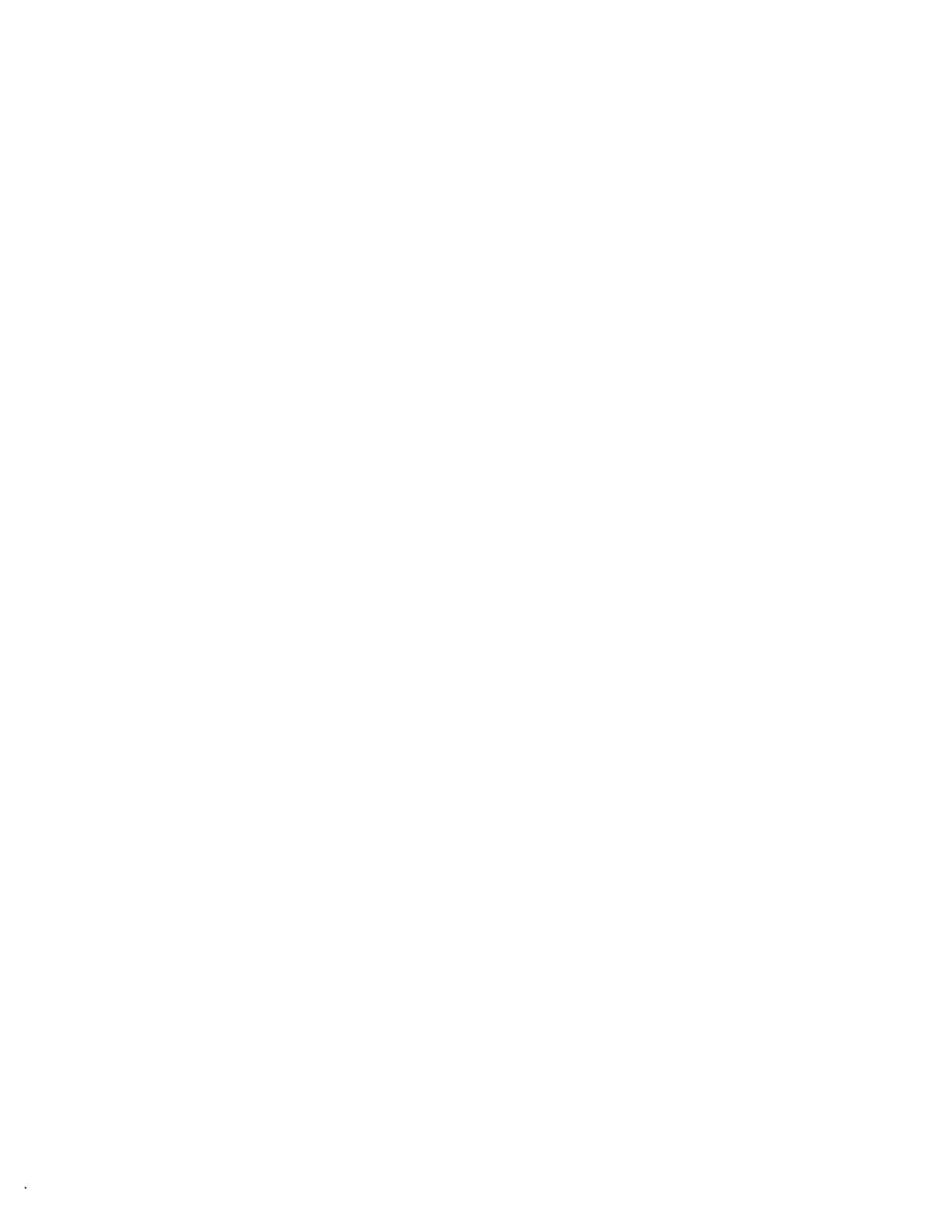}\\
            $t_{212}$
        }
    }
    \end{center}
    \caption{Example space-time diagrams of a leader-electing CA on lattice size $N = 40^2$ from \cite{Banda2014}. Figures show a CA computation starting with a random initial configuration (time $t_0$), followed by $7$ configuration snapshots. The CA reaches a final configuration with a single active cell (leader) at time $t_{212}$.}
    \label{fig:2dca}
\end{figure}

In this paper we analyze two-dimensional CAs, where cells are topologically organized on a two-dimensional grid with cyclic boundaries, i.e., we treat them as tori. 
The true power of our analysis is that it applies to  two-dimensional CAs with arbitrary neighborhood and transition functions. We rely only on their uniformity: each cell has the same neighborhood and transition function; and synchronous update, the attributes typically assumed for a standard CA.



Figure \ref{fig:2d_ca-update} shows the update mechanism for a two-dimensional binary CA with a Moore neighborhood, a square neighborhood with radius $r = 1$ containing $9$ cells. The dynamics of two-dimensional CAs are illustrated as a series of configuration snapshots, where an active cell is black and an inactive cell white (Figure \ref{fig:2dca}).

\section{Shift-Symmetric Configurations}
\label{sec:symmetric-configs}

As stated by Angluin \cite{Angluin1980}, the problem of reaching a ``center'' (i.e., leader) in homogeneous configurations is insolvable by any anonymous deterministic algorithm (including CAs). The CA uniformity can be embedded in its transition function, the deterministic update, synchronicity, topology, configuration, and cells' anonymity. Intuitively, a fully uniform system in terms of its structure, configuration, and computational mechanisms cannot produce any reasonable or complex dynamics. 

We show that Angluin's homogeneous configurations of $0^N$ and $1^N$ belong to a much larger class of so-called shift-symmetric configurations. In this section we formalize the concept of configuration shift-symmetry by employing vector translations and group theory. Figure \ref{fig:symmetric-ca-2d} depicts a CA computation on a two-dimensional shift-symmetric configuration. Compared to the one-dimensional case \cite{Banda2015}, two dimensions are more symmetry-potent.

\begin{figure}[b!]
    \begin{center}
        \includegraphics[width=0.23\textwidth]{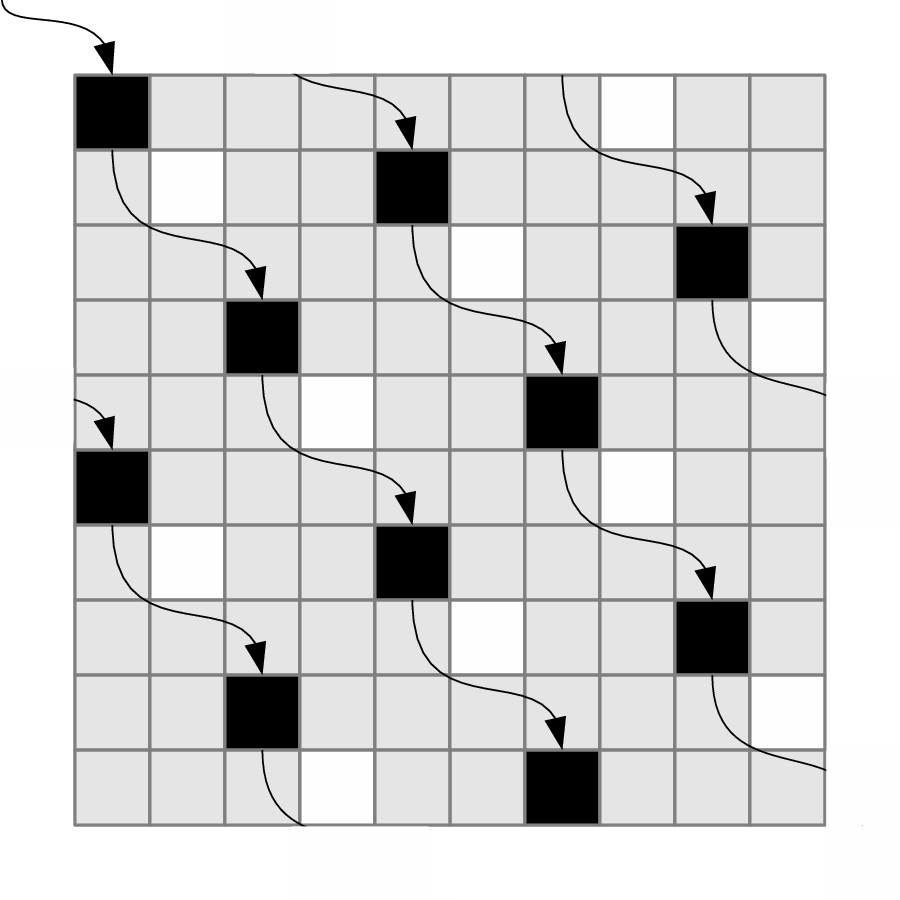}
     \end{center}
        \caption{Schematic of a shift-symmetric two-dimensional configuration generated by the vector ${\bf v} = (2,3)$ on $Z_{10 \times 10}$.}
        \label{fig:2d_symmetric-vector}
\end{figure}

It is important to mention that we deal with \textit{square} configurations only. Nevertheless, we suggest most of the lemmas and theorems could be extended to incorporate arbitrary rectangular shapes.
Also, the formulas and methodology to enumerate two-dimensional shift-symmetric configurations could be generalized to arbitrarily many dimensions. For consistency, however, we leave the rectangular as well as $n$-dimensional extensions for future consideration. Note that in order to improve the readability of the main text, all proofs and formally defined lemmas and theorems appear in Appendix \ref{appendix:proofs}. The non-trivial proofs from the appendix are referenced by the $\Box$ symbol.


First, we define a shift-symmetric (square) configuration by a given vector as shown in Figure \ref{fig:2d_symmetric-vector}. Formally, for a non-zero vector (pattern shift) ${\bf v} \in \mathbb{Z}_n \times \mathbb{Z}_n$ we denote by
\begin{equation}
S_{n \times n}({\bf v}) = \{\mathbf{s} \in \Sigma^{n \times n} \,|\, \forall {\bf u} \in \mathbb{Z}_n \times \mathbb{Z}_n : s_{{\bf u}} = s_{{\bf u} \oplus {\bf v}}\}
\end{equation}
the set of all {\bf shift-symmetric square configurations} of size $N = n^2$ relative to ${\bf v}$ over the alphabet $\Sigma$, where $\oplus$ denotes coordinate-wise addition on $\mathbb{Z}_n \times \mathbb{Z}_n$. 

\begin{figure}[t!]
    \begin{center}
    \subfigure {
        \parbox{0.14\textwidth}{
            \includegraphics[width=0.14\textwidth,frame]{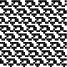} \\
            $t_0$
        }
    }
    \subfigure {
        \parbox{0.14\textwidth}{
            \includegraphics[width=0.14\textwidth,frame]{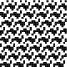} \\
            $t_1$
        }
    }
    \subfigure {
        \parbox{0.14\textwidth}{
            \includegraphics[width=0.14\textwidth,frame]{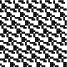} \\
            $t_2$
        }
    }
    \end{center}
    \caption{Space-time diagrams of CA computation on a two-dimensional binary shift-symmetric configuration showing a lattice at three consecutive time steps. Once reached, a shift-symmetry cannot be broken.}
    \label{fig:symmetric-ca-2d}
\end{figure}

Note that as opposed to our previous work \cite{Banda2014}, we renamed \textit{symmetric} to \textit{shift-symmetric} configurations to avoid confusion with reflective or rotational symmetries. These two symmetry types, unlike shift-symmetry, are not generally preserved by a transition function unless we impose certain ``symmetric'' properties on the transitions.

Since any translation by a non-zero vector ${\bf v}$ defines a configuration symmetry, we can study shift-symmetric configurations with the techniques of group theory. From now on, we will call such a non-zero vector ${\bf v} \in \mathbb{Z}_n \times \mathbb{Z}_n$ that we use for state translation a {\em generator} formalized as
\begin{equation}
S_{n \times n}({\bf v}) = \{\mathbf{s} \in \Sigma^{n \times n} \,|\, \forall {\bf u} \in \mathbb{Z}_n \times \mathbb{Z}_n \forall {\bf w} \in \langle {\bf v} \rangle : s_{{\bf u}} = s_{{\bf u} \oplus {\bf w}}\},
\end{equation}
where $\langle {\bf v} \rangle$ is the cyclic subgroup of  $\mathbb{Z}_n \times \mathbb{Z}_n$ generated by ${\bf v}$.

Trivially, for any non-zero ${\bf v} \in \mathbb{Z}_n \times \mathbb{Z}_n$
\begin{equation-w-ref}[lemma:symmetric-config-set-2d-l1_l2-size]
|S_{n \times n}({\bf v})| = |\Sigma|^{\frac{n^2}{|\langle {\bf v} \rangle|}}.
\end{equation-w-ref}









In the following text we bridge shift-symmetry, which is associated with configurations, i.e., static \textit{states}, with any uniform transition rule, which defines the \textit{dynamics} of CA. We show that shift-symmetry cannot be broken, thus it fundamentally restricts the reachable states and the potentiality of a transition rule. More formally, for a vector ${\bf v}$ and any uniform global transition rule $\Phi$
\begin{equation-w-ref}[theorem:symmetric-stays-symmetric-2d]
{\bf s} \in S_{n \times n} ({\bf v}) \implies \Phi(\mathbf{s}) \in S_{n \times n} ({\bf v}),
\end{equation-w-ref}



and so, by induction, a non-symmetric configuration ${\bf q} \notin S_{n \times n} ({\bf v})$ is unreachable from a shift-symmetric ${\bf s} \in S_{n \times n} ({\bf v})$
\begin{equation}
\Phi^i(\mathbf{s}) \in S_{n \times n}({\bf v}) \niton {\bf q}.
\end{equation}


As a consequence, several tasks for CA are principally insolvable. For instance, a target configuration for leader election \cite{Smith71} contains exactly one cell in the leader state $a \in \Sigma$. This configuration is asymmetric for $n > 1$ as shown in Figure \ref{fig:symmetric-vs-asymmetric-examples} (asym-d), and therefore unreachable from any shift-symmetric configuration. Further, several image-processing tasks illustrated in Figure \ref{fig:symmetric-vs-asymmetric-examples} are insolvable: e.g., image translation (sym-c to asym-c) \cite{ioannidis2014}, and pattern recognition or noise filtering (sym-c to asym-f and sym-a to sym-b) \cite{Rosin2006}. Also, the task of random \cite{tomassini2000} or prime $p$ number generation is insolvable if $p\not| n$ (for $p$ = 7: sym-a to asym-e). If a configuration is  shift-symmetric the associative memory \cite{chowdhury1995} has a corrupted (non-uniform) hashing function to handle collisions (e.g., sym-a to asym-a). This in general sense also applies to encryption \cite{wang2013}.


\begin{figure}[t!]
    \begin{center}
        \includegraphics[width=0.49\textwidth]{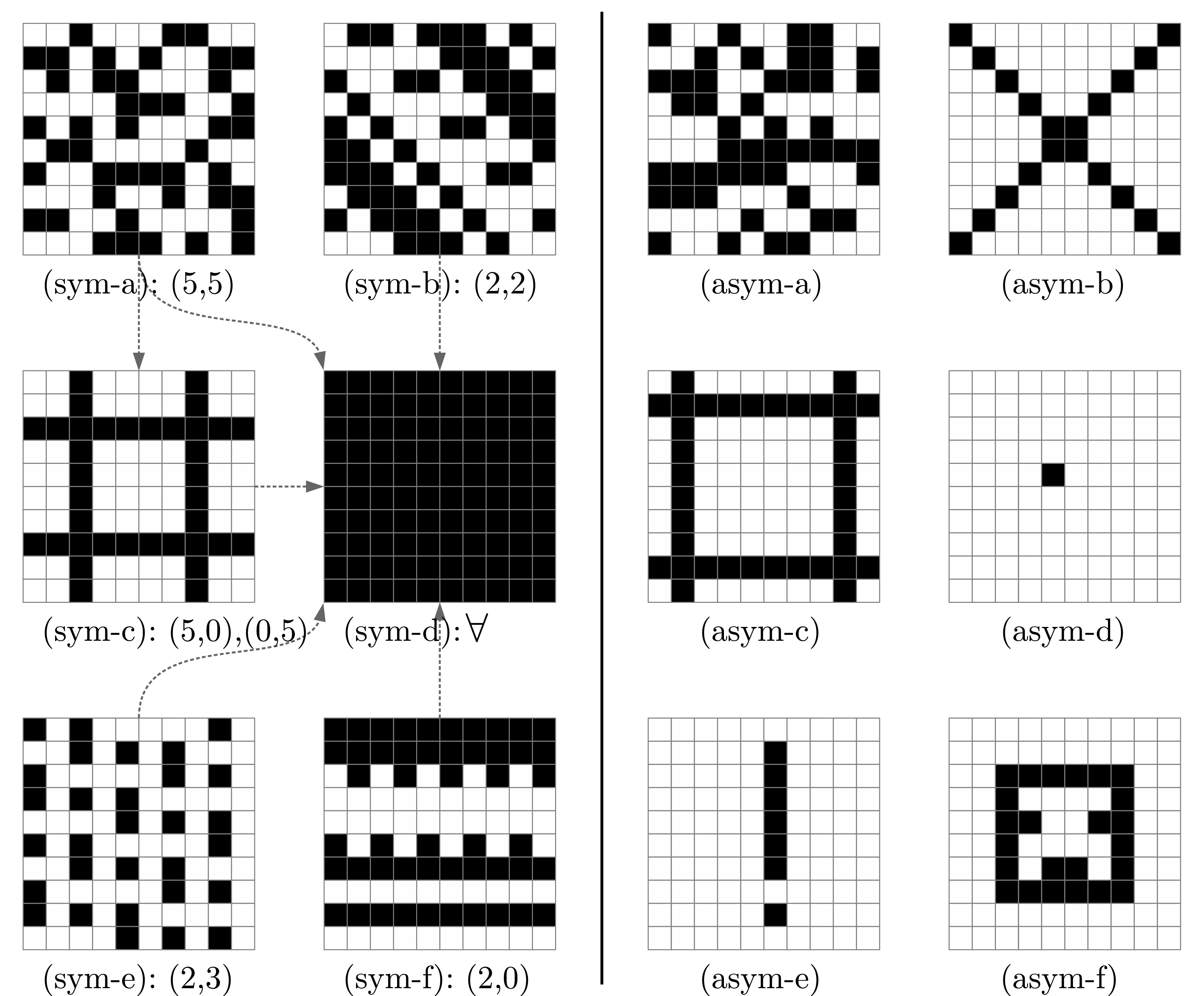}
     \end{center}
    \caption{Examples of shift-symmetric configurations with generating vector(s) in the left, and shift-asymmetric configurations in the right column, all on the lattice $Z_{10 \times 10}$. Since a shift-symmetry by a specific vector cannot be lost, there exists no CA (transition function), which would eventually reach any of the asymmetric configurations (right) from any of the symmetric ones (left). That also means a shift-symmetric configuration by a vector ${\bf u}$ cannot be reached from another shift-symmetric configuration, which is not generated by $\langle {\bf u} \rangle$. For instance the configuration sym-a, generated by the vector ${\bf v}=(5,5)$, cannot be transformed to the configuration sym-b generated by the vector ${\bf u} = (2,2)$, but can be transformed to the configuration sym-c, since its vector space $\langle (5,0),(0,5) \rangle \ni {\bf v}$. Reachability among the shift-symmetric configurations in the figure are indicated by arrows.}
    \label{fig:symmetric-vs-asymmetric-examples}
\end{figure}

\section{Enumerating Shift-Symmetric Configurations}
\label{sec:enum-symmetric-configs}

In this section we will further investigate shift-symmetric two-dimensional configurations and ask how many there are in a square lattice of size $N = n^2$. First, to generalize shift-symmetry and lay a solid ground for group-centric analysis we define the symmetric configurations over several generators.

Let $\mathbb{L} \subseteq \mathbb{Z}_n \times \mathbb{Z}_n$.
We define the set of {\bf $\mathbb{L}$-symmetric configurations} to be the set
\begin{equation}
S_{n \times n}(\mathbb{L}) = \{\mathbf{s} \in \Sigma^{n \times n} \,|\, \forall {\bf u} \in \mathbb{Z}_n \times \mathbb{Z}_n,\,\forall {\bf v} \in \langle \mathbb{L} \rangle : s_{{\bf u}} = s_{{\bf u} \oplus {\bf v}}\},
\end{equation}
where $\langle \mathbb{L} \rangle = \{c_1 {\bf v}_1 \oplus \ldots \oplus c_{|\mathbb{L}|} {\bf v}_{|\mathbb{L}|} \, | \, c_i \in \mathbb{Z}_n\}$.
In other words, $S_{n \times n}(\mathbb{L})$ denotes the set of all shift-symmetric configurations of size $N = n^2$ over the alphabet $\Sigma$ with generator set $\mathbb{L}$.

 
Directly from the definition, for any subset $\mathbb{L} \subseteq \mathbb{Z}_n \times \mathbb{Z}_n$,
\begin{equation}
\label{cor:symmetric-config-set-2d-set-size}
|S_{n \times n}(\mathbb{L})| = |\Sigma |^\frac{n^2}{|\langle \mathbb{L} \rangle|},    
\end{equation}

and for any ${\bf u}, {\bf v} \in \mathbb{Z}_n \times \mathbb{Z}_n$
\begin{equation}
\label{cor:symmetric-config-set-2d-cap}
S_{n \times n}({\bf u}) \cap S_{n \times n}({\bf v}) = S_{n \times n}(\{{\bf u},{\bf v}\}).    
\end{equation}











The following equivalence, which may sound counterintuitive at first, adapts shift-symmetry for the theory of groups. It states that if a vector ${\bf v}$ generates a cyclic \textit{subgroup} of another vector ${\bf u}$, then its set of shift-symmetric configurations is a \textit{superset} (not a subset!) of that generated by the vector ${\bf u}$ and vice versa, i.e., for any ${\bf u}, {\bf v} \in \mathbb{Z}_n \times \mathbb{Z}_n$
\begin{equation-w-ref}[lemma:symmetric-config-subset-2d]
S_{n \times n}({\bf u}) \subseteq S_{n \times n}({\bf v}) \iff \langle {\bf v} \rangle \leq \langle {\bf u} \rangle.
\end{equation-w-ref}




\vspace{5pt}

Now, in a straightforward manner, we define the set of all shift-symmetric configurations $S_{n \times n}$ over all possible combinations of vectors (shifts) for a given lattice as
\begin{equation}
\label{def:symmetric-config-set-2d}
S_{n \times n} = \bigcup_{{\bf 0}\not={\bf v} \in \mathbb{Z}_n \times \mathbb{Z}_n} S_{n \times n}({\bf v}).
\end{equation}


Due to non-trivial intersections of the sets $S_{n \times n}({\bf v})$, it is fairly unpractical to count the shift-symmetric configurations over all $n^2-1$ vectors. We, instead, construct significantly fewer generators using prime factors of $n$, which equivalently produce an entire set of shift-symmetric configurations. 

We start with the definition of the generators. For any natural number $n$ let $n=\prod_{j=1}^{\omega(n)} p_j^{\alpha_j}$ be the prime factorization, where $\omega(n)$ denotes the number of distinct prime factors. We define the generator set $G_n$ as
\begin{equation}
\label{def:gen-sets}
    G_n = \bigcup_{j = 1}^{\omega(n)} G_n(p_j),
\end{equation}
where for each prime divisor $p_j$ 
\begin{equation*}
    G_n(p_j) = \left\{\left(0, \frac{n}{p_j}\right)\right\} \cup \left\{ \left(\frac{n}{p_j}, i \frac{n}{p_j}\right):0 \leq i \leq p_j-1 \right\}.
\end{equation*}

The total number of these generators is then 
\begin{equation}
|G_n| =  \omega(n) + \sum_{i = 1}^{\omega(n)} p_i.
\end{equation}




Using some divisibility arguments we can prove that they indeed produce all shift-symmetric configurations 
\begin{equation-w-ref}[lemma:symmetric-config-primes-2d]
S_{n \times n} = \bigcup_{{\bf w} \in G_n} S_{n \times n}({\bf w}).
\end{equation-w-ref}


Further, we show these prime-based generators are mutually independent, thus greatly simplifying the counting problem to relatively compact closed formulas. For any distinct ${\bf u}, {\bf v} \in G_n$ ($n \in {\mathbb{N}}$),
\begin{equation-w-ref}[lemma:symmetric-config-2d-primes-linearly-independent-all]
| \langle {\bf u} \rangle \cap \langle {\bf v} \rangle | = 1,
\end{equation-w-ref}



and for any distinct ${\bf u},{\bf v} \in G_n(p)$, where $p$ is a prime divisor of $n$ and $\hat{n}=n/p$
\begin{equation-w-ref}[lemma:symmetric-config-2d-primes-linearly-independent]
\langle {\bf u},{\bf v} \rangle = \langle (\hat{n},0), (0,\hat{n}) \rangle.
\end{equation-w-ref}
In particular, $|\langle {\bf u},{\bf v} \rangle| = p^2.$


Finally, we are ready to  enumerate shift-symmetric configurations. In the following formulas, given any ${\bf v}$, ${\bf w} \in {\mathbb{Z}}^k$, we write ${\bf v} \unlhd {\bf w}$ whenever the coordinates satisfy $v_i \leq w_i$ for every $i$ $(1 \leq i \leq k)$. We write ${\bf v} \lhd {\bf w}$ if ${\bf v} \unlhd {\bf w}$ and ${\bf v} \not= {\bf w}$. We denote the sum of the coordinates by $|{\bf v}| = \sum_{i=1}^k v_i$, and for any $m \in \mathbb{Z}$, we write ${\bf m}$ for the $k$-tuple whose coordinates all equal $m$. Let $n=\prod_{i=1}^k p_i^{\alpha_i}$ be the prime factorization of $n$, where $k=\omega(n)$, the number of distinct prime factors of $n$. Note that a one-by-one lattice offers no symmetries since there exists no non-zero shift in $\mathbb{Z}_1 \times \mathbb{Z}_1$.


As the base, we first combine the generators $G_n$ directly by the inclusion-exclusion principle, and apply the fact that these generators are mutually independent (Eq. \ref{lemma:symmetric-config-2d-primes-linearly-independent-all-back-ref}), as well as that their joint size is at most $|\langle {\bf u},{\bf v} \rangle| = p^2$ (Eq. \ref{lemma:symmetric-config-2d-primes-linearly-independent-back-ref}). That gives us
\begin{equation-w-ref}[lemma:symmetric-config-overall-2d-size]
|S_{n \times n}| = \sum_{{\bf 0} \lhd {\bf v} \unlhd {\bf p}+{\bf 1}}
(-1)^{1 + |{\bf v}|} \prod_{i=1}^{k} \binom{p_i + 1}{v_i} \left| \Sigma \right|^{f({\bf v})},
\end{equation-w-ref}
where ${\bf p} = (p_1,\ldots,p_k)$ and $f({\bf v}) = n^2 \prod_{i=1}^k p_i^{-\min (v_i,2)}.$


An alternative and more efficient counting is based on an idea of grouping of the exponential elements $\left| \Sigma \right|^{f({\bf v})}$ from the original formula (Eq. \ref{lemma:symmetric-config-overall-2d-size-back-ref}), which are costly to calculate.
\begin{equation-w-ref}[lemma:symmetric-config-overall-2d-size-alternative]
|S_{n \times n}| = \sum_{{\bf 0} \lhd {\bf v} \unlhd {\bf 2}} 
|\Sigma|^{g({\bf v})}
\left(
\sum_{{\bf v} \unlhd {\bf u} \unlhd {\rm top}({\bf v})} 
(-1)^{1 + |{\bf u}|} \prod_{i=1}^{k}\binom{p_i + 1}{u_i} \right)
\end{equation-w-ref}
where $g({\bf v}) = n^2 \prod_{i=1}^k p_i^{-v_i} \,$ and $\, {\rm top}({\bf v}) \in {\mathbb{Z}}^k$ has $i$th coordinate 
$${\rm top}(i) = \begin{cases} v_i &\mbox{if } v_i < 2 \\
p_i + 1 & \mbox{if } v_i = 2. \end{cases}$$ 


The final formula that follows is the most efficient because, besides having the exponential elements grouped, it also reduces the inner binomial sum to a simple expression $r(i)$.

\begin{equation-w-ref}[theorem:symmetric-config-overall-2d-size-final]
|S_{n \times n}| = \sum_{{\bf 0} \lhd {\bf v} \unlhd {\bf 2}}
(-1)^{1 + |{\bf v}|}
|\Sigma|^{g({\bf v})}
\prod_{i=1}^{k}r(i)
\end{equation-w-ref}
where $g({\bf v}) = n^2 \prod_{i=1}^k p_i^{-v_i}$ and
$$
r(i) =
\begin{cases} 
    1 &\mbox{if } v_i = 0 \\
    p_i + 1 & \mbox{if } v_i = 1 \\
    p_i & \mbox{if } v_i = 2.
\end{cases}
$$



Interestingly, for a prime lattice $n=p$ the vector ${\bf v}$ is $(1)$ and $g({\bf v}) = n$, or ${\bf v} = (2)$ and $g({\bf v}) = 1$, which forces the formula \ref{theorem:symmetric-config-overall-2d-size-final-back-ref} to collapse to
\begin{equation-w-ref}[cor:symmetric-config-overall-2d-prime-size]
|S_{n \times n}| = |\Sigma|^n(n + 1) - |\Sigma|n.
\end{equation-w-ref}

The gradual improvements from Eq. \ref{lemma:symmetric-config-overall-2d-size-back-ref}, \ref{lemma:symmetric-config-overall-2d-size-alternative-back-ref}, and finally to Eq. \ref{theorem:symmetric-config-overall-2d-size-final-back-ref} are illustrated for $n = 2^{\alpha_1} 3^{\alpha_2}$ in Appendix \ref{appendix:examples}.

\begin{figure}[b!]
    \begin{center}
        \begin{overpic}[width=0.48\textwidth]{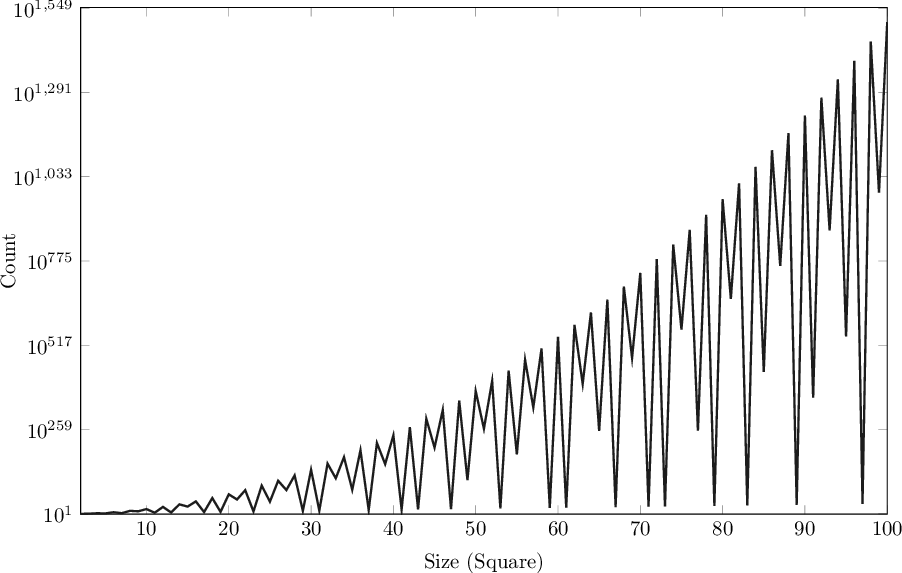}
            \put(35,70){\includegraphics[width=0.22\textwidth]{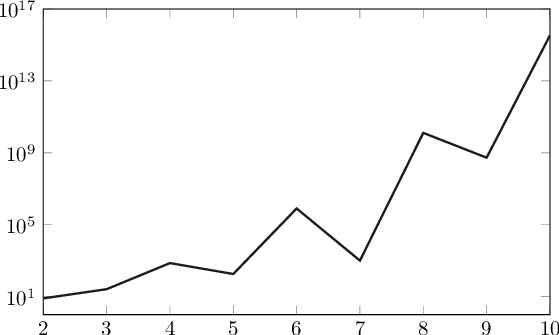}}
        \end{overpic}
    \end{center}
    \caption{Number of shift-symmetric two-dimensional binary ($|\Sigma| = 2$) configurations for the lattice sizes $2^2$ to $100^2$ with an inset focused on the area $2^2$ to $10^2$. Note the local minima for prime and local maxima for even sizes.}
    \label{fig:symmetric-count-2d}
\end{figure}

\begin{figure}[b!]
    \begin{center}
        \includegraphics[width=0.45\textwidth]{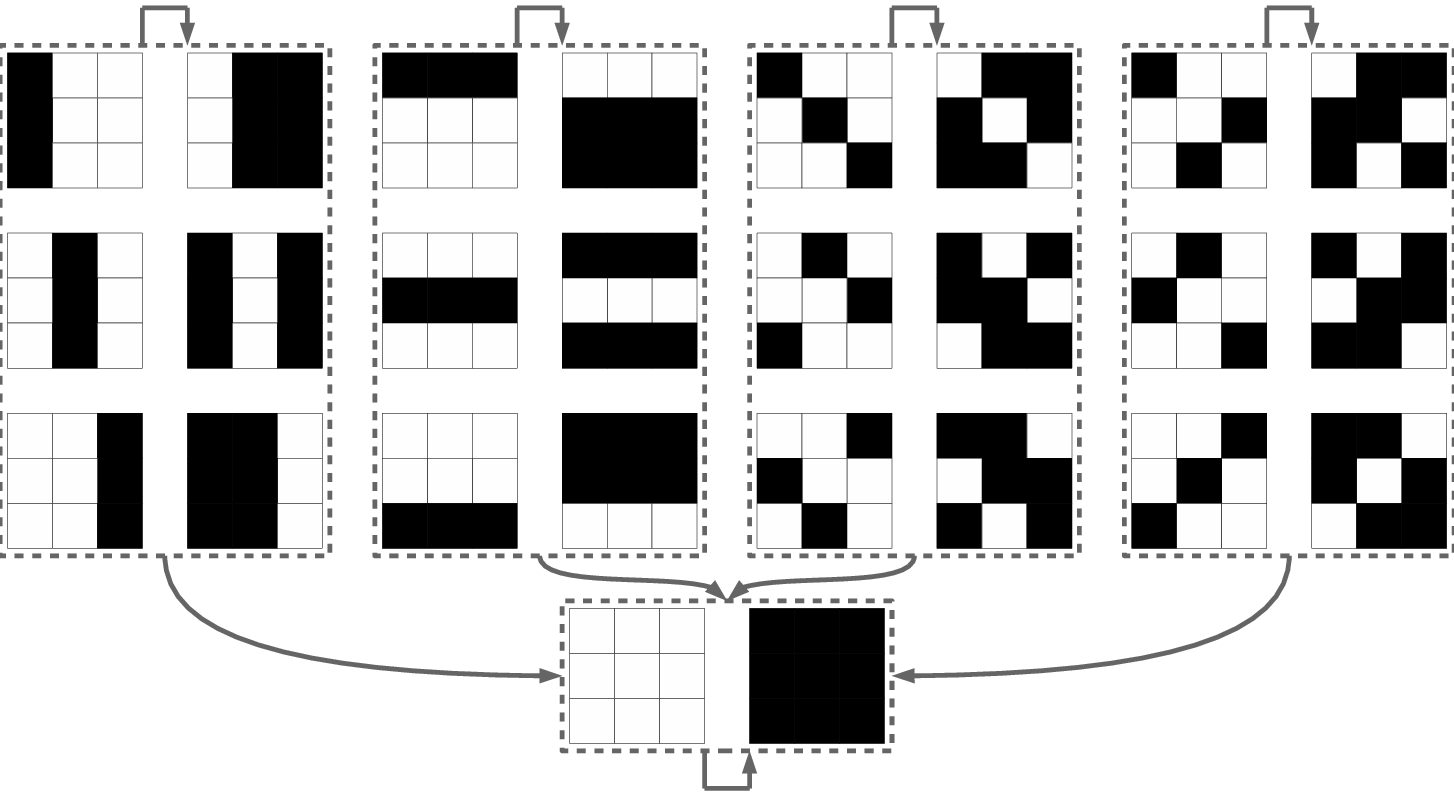}
    \end{center}
    \caption{All $26$ binary shift-symmetric configurations for the lattice size of $3^2$ grouped into 5 classes based on the generating vector(s). The vectors are from left to right: $(0,1),(1,0),(1,1),(1,2)$ and the one at the bottom containing all of them. The arrows show allowed transitions. Note that for prime-size binary lattices \mbox{$|S_{n \times n}| = 2^n(n + 1) - 2n$}.}
    \label{fig:symmetric-configs-3x3}
\end{figure}

\subsection{Bounding the Number of Shift-Symmetric Configurations}
\label{subsec:bounds-symmetric-configs}

In the previous section we derived closed and increasingly efficient formulas for counting the number of shift-symmetric configurations in a square lattice $N = n^2$. To get a deeper and more qualitative insight we now bound this number from the top and the bottom by exponential functions. We prove that the lower bound is tight on prime lattices (example in Fig. \ref{fig:symmetric-configs-3x3}), whereas local maxima are reached on even ones (Fig. \ref{fig:symmetric-count-2d}).

More precisely, for any $n \in {\mathbb{N}}$, $|S_{n \times n}|$ can be bounded as
\begin{equation-w-ref}[lemma:symmetric-config-lower-bound]
|\Sigma|^n(n + 1) - |\Sigma|n \leq |S_{n \times n}|,
\end{equation-w-ref}
where equality holds if and only if $n$ is a prime.






For an upper bound, let $n=\prod_{i=1}^k p_i^{\alpha_i}$ be the prime factorization of $n$, where $k=\omega(n)$, the number of distinct prime factors of $n$. Then
\begin{equation-w-ref}[lemma:symmetric-config-upper-bound]
|S_{n\times n}| \leq 6 \log_2(n)|\Sigma|^\frac{n^2}{2}.
\end{equation-w-ref}




Note that our bound is significantly lower than the bound $|\Sigma|^{n^2}-(n^2-1)|\Sigma|^{n^2/2}$ found by Castillo-Ramirez and Gadouleau \cite{castillo2016}. Also recall that they addressed a more general problem of counting aperiodic configurations on an arbitrary group.

By combining the inequalities \ref{lemma:symmetric-config-lower-bound-back-ref} and \ref{lemma:symmetric-config-upper-bound-back-ref} the number of shift-symmetric configurations satisfies
\begin{equation}
\label{cor:symmetric-config-lower-upper-bounds}
|\Sigma|^n(n + 1) - |\Sigma|n \leq |S_{n\times n}| \leq 6 \log_2(n)|\Sigma|^\frac{n^2}{2}.
\end{equation}

\subsection{Probability of Selecting Shift-Symmetric Configuration over Uniform Distribution}
\label{subsec:prob-symmetric-configs-uniform}

To calculate the probability that a randomly drawn configuration is shift-symmetric, we first handle a uniform distribution, in which each symbol from $\Sigma$ for $s_i$ in configuration $\mathbf{s}$ is equally likely. For non-symmetric tasks, this probability directly equals a least expected insolvability (or error lower bound).

Overall, there exist $|\Sigma|^{n^2}$ configurations and each configuration is equally likely, hence the probability of selecting a shift-symmetric configuration in a square lattice of size $N = n^2$ over uniform distribution is $P_{n \times n}^{\rm{unif}} = \frac{|S_{n \times n}|}{|\Sigma|^{n^2}}$. Further, by applying the inequality \ref{cor:symmetric-config-lower-upper-bounds} and knowing that $n|\Sigma|^n \leq |\Sigma|^n(n + 1) - |\Sigma|n$ we can bound the probability as
\begin{equation}
\label{lemma:symmetric-config-uniform-prob-lower-upper-bounds}
n|\Sigma|^{-n^2 + n} \leq P_{n \times n}^{\rm{unif}} \leq 6 \log_2(n)|\Sigma|^{-\frac{n^2}{2}}.
\end{equation}

\begin{figure}[b!]
    \begin{center}
        \begin{overpic}[width=0.48\textwidth]{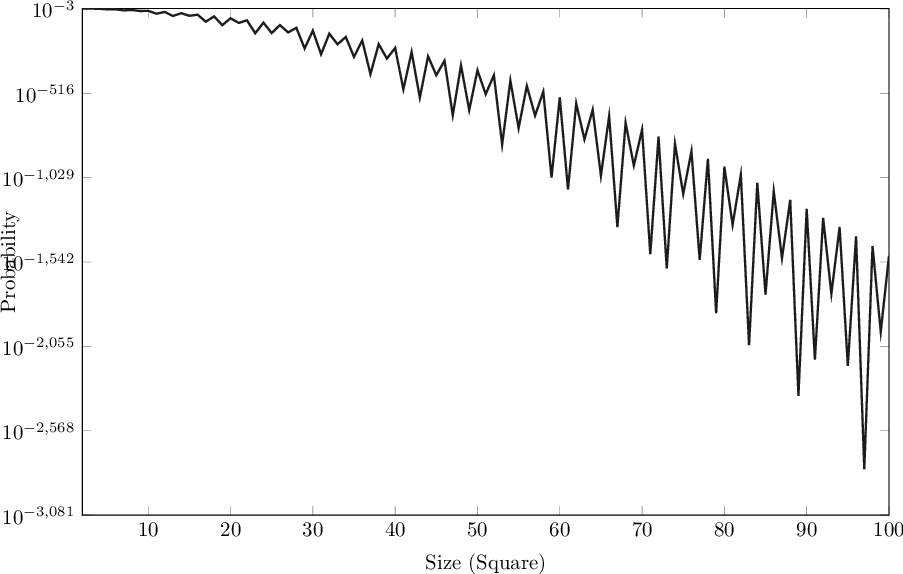}
            \put(35,30){\includegraphics[width=0.22\textwidth]{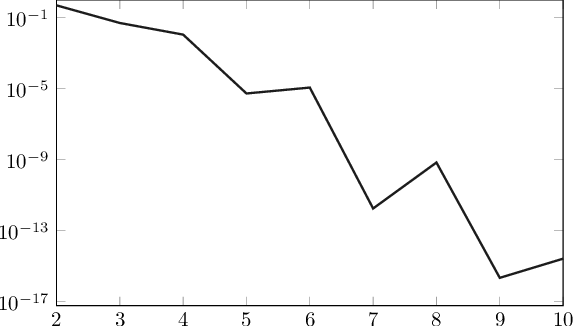}}
        \end{overpic}
    \end{center}
    \caption{Probability of selecting a shift-symmetric two-dimensional binary ($|\Sigma| = 2$) configuration using uniform distribution for the lattice sizes $2^2$ to $100^2$ with an inset focused on the area $2^2$ to $10^2$. Note the local minima for prime and local maxima for even sizes ($n > 4$).}
    \label{fig:symmetric-uniform-prob-2d}
\end{figure}

As exemplified in Figure \ref{fig:symmetric-uniform-prob-2d} and mathematically rooted in the inequality \ref{lemma:symmetric-config-uniform-prob-lower-upper-bounds}, the probability $P_{n \times n}^{\rm{unif}}$ decreases rapidly: square-exponentially by $n$ or exponentially by the lattice size $N = n^2$. Since $|S_{n \times n}|$ depends on the prime factorization of $n$ the probability is non-monotonous. Similarly to $|S_{n \times n}|$ the probability $P_{n \times n}^{\rm{unif}}$ reaches local minima for prime and local maxima for even lattices ($n > 4$).
\section{Enumerating Shift-Symmetric Configurations for $k$ Active Cells}
\label{sec:enum-symmetric-configs-active-cells}

Having enumerated \textit{all} shift-symmetric configurations we now tackle a subproblem of enumerating configurations with a specific number of cells in a given state, such as the state \textit{active}. The motivation behind this endeavour is to calculate the probability of selecting a shift-symmetric configuration generated by a density-uniform distribution.

We first define a set of shift-symmetric configurations with $k$ cells in a special state $a$. Formally, for any state $a \in \Sigma$ and $n,k\in {\mathbb{N}}$, we define $D^a_{n \times n,k}$ to be the set of all square configurations with exactly $k$ sites in state $a$:
\begin{equation}
D^a_{n\times n,k} = \{ \mathbf{s} \in \Sigma^{n \times n} \, | \, _{\#_{a}} \mathbf{s} = k\},
\end{equation}

where ${\#_{a}} \mathbf{s}$ denotes the number of cells in a configuration $\mathbf{s}$ that are in a state $a$. Accordingly, let $S^a_{n \times n,k}$ be the set of such configurations that are symmetric:
\begin{equation}
\label{def:symmetric-config-2d-active-cell}
S^a_{n \times n,k} = S_{n \times n} \cap D^a_{n \times n,k},
\end{equation}

and for any ${\bf v} \in \mathbb{Z}_n \times \mathbb{Z}_n$, let $S^a_{n \times n,k}({\bf v})$ denote the set of configurations in $S^a_{n \times n,k}$ that are generated by ${\bf v}$, so that
\begin{equation}
S^a_{n \times n,k}({\bf v}) = S_{n \times n}({\bf v}) \cap D^a_{n \times n,k}.
\end{equation}



As a direct corollary, for any $a \in \Sigma$, any $n,k \in \mathbb{N}$, and ${\bf v} = (l_1, l_2) \in \mathbb{Z}_n \times \mathbb{Z}_n$,
$$S^a_{n \times n,k}({\bf v}) \neq \emptyset$$

\begin{equation}
\label{cor:symmetric-config-set-2d-active-cell-nonempty}
\mbox{iff } |\langle {\bf v} \rangle| = \frac{n}{{\rm gcd}(l_1,l_2,n)} \mbox{ is an integer that divides $k$}.
\end{equation}

To launch our enumeration endeavour, we focus first on shift-symmetric configurations of a single generating vector. For any $a \in \Sigma$, any $k \in \mathbb{N}$, and ${\bf v} \in \mathbb{Z}_n \times \mathbb{Z}_n$ such that $|\langle {\bf v} \rangle|$ divides $k$
\begin{equation-w-ref}[lemma:symmetric-config-2d-active-cell-intersect-size]
\left| S^a_{n \times n,k}({\bf v}) \right| = \bigg( \binomempty{\frac{n^2}{|\langle {\bf v} \rangle|}}{\frac{k}{|\langle {\bf v} \rangle|}} \bigg) (|\Sigma| - 1)^{\frac{n^2 - k}{|\langle {\bf v} \rangle|}}.
\end{equation-w-ref}



To derive the counting formulas for the specifics of $k$-active-cell configurations we mimic the advancements of
the three counting techniques based on mutually-independent generators for $|S_{n\times n}|$ from Section \ref{sec:enum-symmetric-configs}, but this time, we root them into Eq. \ref{lemma:symmetric-config-2d-active-cell-intersect-size-back-ref}.

As before we start with a base formula. Pick $n,k \in {\mathbb{N}}$ with $k \leq n$ and let $d={\rm{gcd}}(k,n)$.
Let $n =\prod_{i=1}^{\omega(n)} p_i^{\alpha_i}$, $k=\prod_{i=1}^{\omega(k)} q_i^{\beta_i}$, and $d = \prod_{i=1}^{\omega(d)} r_i^{\gamma_i}$ be the prime factorizations of $n$, $k$, $d$, respectively. Then for any $a \in \Sigma$,
\begin{equation-w-ref}[lemma:symmetric-config-2d-active-cell-size]
\begin{split}
|S^a_{n \times n,k}| = \sum_{{\bf 0} \lhd {\bf u} \unlhd {\bf r} + {\bf 1}}
(-1)^{1 + |{\bf u}|} \left( \prod_{i=1}^{\omega(d)} \binom{r_i + 1}{u_i} \right) \\
\times \binom{
\frac{n^2}{h({\bf u})}}{
\frac{k}{h({\bf u})}} (\left| \Sigma \right|-1)^{\frac{n^2-k}{h({\bf u})}},
\end{split}
\end{equation-w-ref}
where ${\bf r} = (r_1,\ldots,r_{\omega(d)}) \,$ and $\, h({\bf u}) = \prod_{i=1}^{\omega(d)} r_i^{\min (u_i,2)}.$



Similarly to Eq. \ref{lemma:symmetric-config-overall-2d-size-alternative-back-ref} from Section \ref{sec:enum-symmetric-configs}, the following alternative counting method is more efficient than the core Eq. \ref{lemma:symmetric-config-2d-active-cell-size-back-ref} due to the grouping of the exponential elements. 
\begin{equation-w-ref}[lemma:symmetric-config-2d-active-cell-size-alternative]
\begin{split}
|S^a_{n \times n, k}| = \sum_{\substack{
{\bf 0} \lhd {\bf v} \unlhd {\bf 2} \\
{\bf v} \unlhd {\bf u} \unlhd {\rm top}({\bf v})}}
(-1)^{1 + |{\bf u}|}
\binom{\frac{n^2}{h({\bf v})}}{\frac{k}{h({\bf v})}}
(|\Sigma|-1)^{\frac{n^2-k}{h({\bf v})}} \\
\times \prod_{i=1}^{\omega(d)}\binom{r_i + 1}{u_i}, 
\end{split}
\end{equation-w-ref}
where $h({\bf v}) = \prod_{i=1}^{\omega(d)} r_i^{\min(v_i,2)} \,$ and $\, {\rm top}({\bf v}) \in {\mathbb{Z}}^{\omega(d)}$ has $i$th coordinate $${\rm top}(i) = \begin{cases} v_i &\mbox{if } v_i < 2 \\
r_i + 1 & \mbox{if } v_i = 2. \end{cases}$$


At last, as a parallel to Eq. ~\ref{theorem:symmetric-config-overall-2d-size-final-back-ref}
we derive the final formula, which further simplifies the counting mechanics by collapsing the inner binomial sum to a simple expression $r(i)$. 
\begin{equation-w-ref}[theorem:symmetric-config-2d-active-cell-size-final]
|S^a_{n \times n, k}| = \sum_{{\bf 0} \lhd {\bf v} \unlhd {\bf 2}}
(-1)^{1 + |{\bf v}|}
\binom{\frac{n^2}{h({\bf v})}}{\frac{k}{h({\bf v})}}
(|\Sigma|-1)^{\frac{n^2-k}{h({\bf v})}} \prod_{i=1}^{\omega(d)}r(i), 
\end{equation-w-ref}
where $h({\bf v}) = \prod_{i=1}^{\omega(d)} r_i^{\min(v_i,2)} \,$ and $$ r(i) =
\begin{cases} 
    1 &\mbox{if } v_i = 0 \\
    p_i + 1 & \mbox{if } v_i = 1 \\
    p_i & \mbox{if } v_i = 2.
\end{cases}$$


As a special case, it can be shown that the number of \textit{binary} symmetric configurations ($|\Sigma| = 2$) with $k$ sites in state $a$ is
\begin{equation}
\label{cor:symmetric-config-2d-active-cell-size-2-back-ref}
|S^a_{n \times n, k}| = \sum_{{\bf 0} \lhd {\bf v} \unlhd {\bf 2}}
(-1)^{1 + |{\bf v}|}
\binom{\frac{n^2}{h({\bf v})}}{\frac{k}{h({\bf v})}} \prod_{i=1}^{\omega(d)}r(i).
\end{equation}


For illustration purposes, an example of the three increasingly more compact counting formulas is given for $n = 2^{\alpha_1} 3^{\alpha_2}$ and $k = 2^{\beta_1} 3^{\beta_2}, \beta_1 \leq \alpha_1, \beta_2 \leq \alpha_2$ in Appendix \ref{appendix:examples}. 

\begin{figure}[t!]
    \begin{center}
        \begin{overpic}[width=0.48\textwidth]{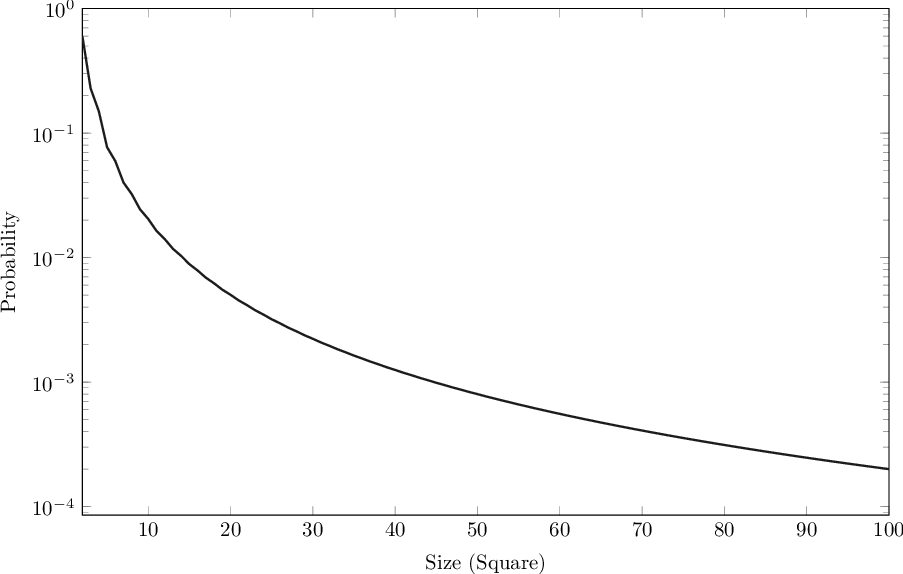}
            \put(115,70){\includegraphics[width=0.22\textwidth]{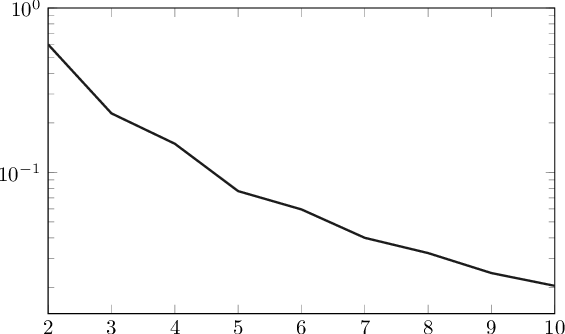}}
        \end{overpic}
    \end{center}
    \caption{Probability of selecting a shift-symmetric two-dimensional binary ($|\Sigma| = 2$) configuration using density-uniform distribution for the lattice sizes $2^2$ to $100^2$ with an inset focused on $2^2$ to $10^2$.}
    \label{fig:symmetric-density-uniform-prob-2d}
\end{figure}

\subsection{Probability of Selecting Shift-Symmetric Configuration over Density-Uniform Distribution}
\label{subsec:prob-symmetric-configs-density-uniform}

Besides a uniform distribution, a CA's performance is commonly evaluated using a so-called density-uniform distribution, in which the probability of selecting $k$ active cells ($_{\#_{a}} \mathbf{s} = k$), a \textit{density}, is uniformly distributed. Since for a density $k$ there exist $\binom{n^2}{k}(|\Sigma| - 1)^{n^2-k}$ configurations and each density is equally likely, the probability of selecting a shift-symmetric configuration in a lattice $N = n^2$ over a density-uniform distribution is then
\begin{equation}
    P_{n \times n}^{\rm{dens}} = \frac{1}{n^2+1} \sum_{k = 0}^{n^2} \frac{|S^a_{n \times n, k}|}{\binom{n^2}{k}(|\Sigma| - 1)^{n^2-k}}. 
\end{equation}


As presented in Figure ~\ref{fig:symmetric-density-uniform-prob-2d}, the probability for density-uniform distribution decreases a magnitude slower than for the uniform one and reaches $0.001$ even for $N = 45^2$. That is due to the fact that density-uniform distribution selects configurations with a few or many active cells, which are combinatorially more symmetric, more often.
\section{Shift-Symmetric Configuration Detection}
\label{sec:detect-symmetric-config}

For practical reasons, e.g., to test whether a current system's configuration is shift-symmetric, and if yes take an action (restart), we provide an algorithm to effectively detect an occurrence of shift-symmetry.

First, to find out whether a configuration is shift-symmetric by a shift $\mathbf{v}$ we start at a corner cell $\mathbf{w} = (0,0)$ and check if all the cells at the orbit $\mathbf{w} \oplus i \mathbf{v}$ are in the same state. If yes, we repeat this process for the next orbit and so on, moving in an arbitrary but fixed order (e.g., left-right up-down), until we check all the cells. If a cell has been visited before we skip it and move on until we find an unvisited cell, which marks a start of the next orbit. Also, if the test fails at any point, a configuration is non-shift-symmetric (by $\mathbf{v}$), and the process can be terminated. Otherwise, the property holds for all the cells and a configuration is shift-symmetric.

To determine whether a configuration is shift-symmetric globally, a naive way would be to try all possible non-zero vectors $\mathbf{v}$ and check if any of them passes the aforementioned procedure. Luckily, as we discovered in Section \ref{sec:symmetric-configs} each configuration shift-symmetry ``overlaps'' with mutually-independent generators from $G_n$. Recall that these generators are defined by prime factors and their total number $|G_n| = \omega(n) + \sum_{i = 1}^{\omega(n)} p_i$ is significantly smaller than $n^2$.

In the worst case, the shift-symmetry test needs to visit all $n^2$ cells and there are $|G_n|$ vectors to try. Since $O(|G_n|) = O({\rm sopf}(n))$ and ${\rm sopf}(n) = \sum_{i = 1}^{\omega(n)} p_i$, also known as the integer logarithm, is at most $n$, the worst-case time complexity is
\begin{equation-w-ref}[theorem:detection-algorithm-worst-case-complexity]
O(n^3).
\end{equation-w-ref}



Similarly, with a slightly more complicated proof, we can show that the average-case time complexity of the shift-symmetry test for a configuration generated from a uniform distribution is
\begin{equation-w-ref}[theorem:detection-algorithm-average-case-complexity]
O(n^2).
\end{equation-w-ref}

Note that the worst and average-case time complexity of $O(n^3)$ and $O(n^2)$ respectively translate to $O(\sqrt{N}N)$ and linear $O(N)$ when interpreted by the optics of the number of cells $N = n^2$. The function ${\rm sopf}(n)$, which plays a crucial role in both O formulas, is of a logarithmic nature in ``most of the cases,'' but $n$ for primes. Since the number of primes is infinite we could not use any tighter asymptote than $n$. However, for randomly chosen $n$ the expected time complexities drop to just $O(log(n)n^2)$ and $O(log^2(n))$ respectively.


It is worth mentioning that the presented algorithm detects \textit{if} a configuration is shift-symmetric but does not count the number of shift-symmetries in a configuration. The validity of the detection holds because we know that \textit{any} shift-symmetric configuration must obey at least one of the prime generators from $G_n$. Nevertheless, to determine the number of shift-symmetries, i.e., the number of vectors with distinct vector spaces in $\mathbb{Z}_n \times \mathbb{Z}_n$ for which the cells at a same orbit share the same state, we would need to consider also sub-vectors, whose satisfiability cannot be generally inferred from the prime generators. Construction of a \textit{counting} algorithm is addressable but goes beyond the scope of this paper.

\section{Discussion and Conclusion}
\label{sec:conclusion}

Shift-symmetry, as we illustrated in the paper, decreases the system's computational capabilities and expressivity, and is generally good to be avoided. For each shift-symmetry, a system falls into, a configuration folds by the order of symmetry and ``independent" computation shrinks to a smaller, prime fraction of the system. The rest is mirrored and lacks any intrinsic computational value or novelty. The number of reachable configurations shrinks proportionally as well.

One of the key aspects of shift-symmetry is that it is maintained (irreversible) for any number of states, and any uniform transition and neighborhood functions. It means that the occurrence of shift-symmetry is rooted in the CA model itself, specifically, in the cells' uniformity, synchronous update, and toroidal topology. Shift-symmetry is preserved as along as a transition function is uniform (shared among the cells), even if non-deterministic. In other words, during each step a transition function can be discarded and regenerated at random. However, within the same synchronous update it must be consistent, i.e., two cells whose neighborhood's sub-configs are the same must be transitioned to the same state.

We showed that a non-shift-symmetric solution is unreachable from a shift-symmetric configuration. Even more, a shift-symmetric configuration cannot be reached from another shift-symmetric one, if the vector space defining the symmetries of the starting configuration is not a subset of the target configuration's vector space. This renders the tasks, such as leader election \cite{Smith71,Banda11}, several image processing routines including pattern recognition \cite{Rosin2006}, and encryption \cite{wang2013}, insolvable by uniform CAs in a general sense.  These procedures are fundamental for many distributed protocols and algorithms. Additionally, leader election contributes to decision making of biological societies \cite{Con08,Lus09}, and is a key driver of cell differentiation \cite{Law92,Nag03} responsible for their structural heterogeneity and specialization.

To determine how likely a configuration randomly generated from a uniform distribution is shift-symmetric, hence insolvable, we efficiently enumerated and bounded the number of shift-symmetric configurations using mutually independent generators. We also introduced a lower, tight prime-size bound, and an upper bound, and showed that even-size lattices are locally most likely shift-symmetric. By specializing on Cartesian powers of cyclic groups (two-dimensional case), we obtained more effective counting and probability formulas and sharper bounds compared to the state-of-the-art work addressing the problem for general groups \cite{castillo2016,gao2016}. We also extended our machinery to a fixed number of active symbols and derived a probability formula for density-uniform distribution.

Overall, shift-symmetry is not as rare as one would think, especially for small or non-prime lattices, or when a configuration is generated using density-uniform distribution. Asymptotically, the probability for uniform distribution drops exponentially with the lattice size but a magnitude slower for a density-uniform distribution. For instance the probability for a $100^2$ square lattice is around $10^{-1505}$ using uniform and $2 \times 10^{-4}$ using density-uniform distribution. Importantly, shift-symmetry does not necessarily have to be harmful for all the tasks. For instance, the density classification \cite{MCD96,CMD03,cenek09,reynaga2003}, which is widely used as a CA benchmark problem, requires a final configuration to be either $1^N$ if the majority of cells are initially in the state $1$, and $0^N$ otherwise. Since the expected homogeneous configurations are fully shift-symmetric, they can be reached potentially from any configuration. Naturally, that depends on the structure of a transition function but shift-symmetry does not impose any strong restrictions here. The ability of reaching a \textit{valid} answer does not necessarily mean reaching a \textit{correct} answer. However, for the density classification, shift-symmetry tolerates the latter as well. It is because a shift-symmetric configuration consists purely of repeated sub-configurations, and so the density (ratio of ones) in a sub-configuration is the same as in the whole.

To detect whether a configuration is shift-symmetric we constructed an algorithm, which, by using the base prime generators, can effectively determine a presence of shift-symmetry in linear $O(N)$ time for prime and just $O((\frac{1}{2}\log(N))^2)$ for randomly chosen $N$ on average.


By moving from one to two dimensions we generalized our machinery to vector translations, which can be extended to the $n$-dimensional case \cite{Brunnet2004}. It is expected that the number of shift-symmetric configurations will grow with the dimensionality of lattice. It will be interesting to investigate this relation from the perspective of prime-exponent divisors.


An important implication of shift-symmetry is that cyclic behavior must occur only within the same symmetry class defined by a set of prime shifts (vectors) as illustrated in Figure \ref{fig:symmetric-configs-3x3}. Note that we count no-symmetry as a class as well. This leads to the realization that once a CA gains a symmetry, i.e., a configuration crosses symmetry classes, it cannot be injective and reversible, and there must exist a configuration without a predecessor, a so-called ``Garden of Eden" configuration \cite{amoroso1970,kari1990}. It means that the only way for the CA to stay injective is to decompose all the configurations into cycles, each fully residing in a certain shift-symmetry class. Again one large class would contain all the non-shift-symmetric configurations. Open question is for which lattices, i.e., for how many shift-symmetric configurations, CAs are non-injective, thus irreversible, on average. As opposed to our shift-symmetric endeavour, which applies to any transition function, investigating injectivity would require to assume something about the transition function, e.g., that is generated randomly. Trivially, for any lattice there always exists an injective transition function. An example is an identity function. 


As we proved, the number of symmetries in any synchronous toroidal CA is non-decreasing. A natural question is: could it be increasing in the ``average" case for a random transition function? We know that the expected behavior of randomly generated CA is most likely chaotic and the attractor length is exponential to the lattice size $N$, as opposed to ordered or complex CAs with linear or quadratic attractors \cite{wuensche1999}. Would the length of attractor be sufficient to discover a shift-symmetry if we keep a random CA running long enough, potentially $|\Sigma|^N$ time steps? As seen in Figure \ref{fig:symmetric-uniform-prob-2d}, the ratio of shift-symmetric configurations assuming a uniform distribution is exponentially decreasing with the lattice size, and prime lattices could produce ``only" around $n|\Sigma|^n$ symmetric configurations. For a randomly chosen lattice size, dimensions, and cell connectivity, we expect the number of reachable symmetries to be significantly smaller than the total number of symmetries available. However, for symmetry-rich lattices, we speculate that toroidal synchronous uniform systems, such as CAs, could undergo \textit{spontaneous symmetrization} contracting an initial configuration to a fully homogeneous state (analogical to Big Crunch). If proven, it would directly imply the system's non-injectivity and irreversibility, and would bind symmetrization with \textit{non-ergodicity}. This hypothesis will be addressed in our future work.






We suggest that several phenomena observed in CA dynamics, such as irreversibility, emergence of structured ``patterns", and self-organization could be explained or contributed to shift-symmetry. As demonstrated by Wolfram \cite{wolfram1983} on 256 elementary one-dimensional CAs, when run long enough, most of these CAs condensate to ordered structures: homogeneous configurations and self-similar patterns, which are in fact shift-symmetric. 


A straightforward way to fight symmetry would be to introduce noise, i.e., to break the uniformity of cells and/or to use an asynchronous update. Based on the amount of noise, this could, however, disrupt the consistency of local, particle-based, interactions, which give rise to a global computation.
Clearly, asynchronicity makes a system more robust but sacrifices 
the information processing by \textit{algebraic} structures, which could exist only due to synchronous update.


Practical utility of the presented enumeration formulas and probability calculations for a given distributed application is that, we can minimize a likelihood of shift-symmetry-caused insolvability as well as the number of resources needed. An online supplementary web page, which implements these formulas as well as an embedded simulator to run a CA on a shift-symmetric configuration, can be found at \url{https://coel-sim.org/symmetry}.







\section*{Acknowledgments}
This material is based upon work supported by the National Science Foundation under grants \#1028120, \#1028378, \#1518833, and by the Defense Advanced Research Projects Agency (DARPA) under award \#HR0011-13-2-0015. The views expressed are those of the author(s) and do not reflect the official policy or position of the Department of Defense or the U.S. Government. Approved for Public Release, Distribution Unlimited.

\clearpage

\appendix
\section{Proofs}
\label{appendix:proofs}

Note: lemmas, theorems, and corollaries are numbered such that those starting with \textbf{E} correspond to the equations in the main text which they are referenced from (e.g., Lemma E.4 $\leftrightarrow$ Equation 4). The remaining non-equation-referenced (auxiliary) lemmas have the \textbf{A} prefix.


\begin{lemma}
\label{lemma:symmetric-config-set-2d-v-group}
For any non-zero vector (generator) ${\bf v} \in \mathbb{Z}_n \times \mathbb{Z}_n$,
$$
S_{n \times n}({\bf v}) = \{\mathbf{s} \in \Sigma^{n \times n} \,|\, \forall {\bf u} \in \mathbb{Z}_n \times \mathbb{Z}_n \forall {\bf w} \in \langle {\bf v} \rangle : s_{{\bf u}} = s_{{\bf u} \oplus {\bf w}}\},$$
where $\langle {\bf v} \rangle$ is the cyclic subgroup of  $\mathbb{Z}_n \times \mathbb{Z}_n$ gen. by ${\bf v}$. \hfill $\Box$ 
\end{lemma}


\setcounter{theorem}{3}

\begin{lemma-w-back-ref}[lemma:symmetric-config-set-2d-l1_l2-size]
For any non-zero ${\bf v} = (l_1, l_2) \in \mathbb{Z}_n \times \mathbb{Z}_n$, the following hold:

\smallskip

(i). $\; |S_{n \times n}({\bf v})| = |\Sigma|^{\frac{n^2}{|\langle {\bf v} \rangle|}} .$

\smallskip

(ii). $\; | \langle {\bf v} \rangle | = \frac{n}{\gcd (l_1, l_2, n)}$,

\smallskip

(iii). $\; |S_{n \times n}({\bf v})| = |\Sigma|^{n \, {\rm gcd}(l_1,l_2,n)}.$
\end{lemma-w-back-ref}
\begin{proof}
(i). When ${\bf v}=(l_1,l_2)$ is repeatedly applied to any cell in the lattice, an orbit is generated, consisting of $|\langle {\bf v} \rangle|$ cells that must share a common state for any configuration in $S_{n \times n}({\bf v})$. The number of distinct orbits of cells in the lattice is simply $\frac{n^2}{|\langle {\bf v} \rangle|}$. Any configuration in $S_{n \times n}({\bf v})$ is thus uniquely determined by choosing a state from $\Sigma$ for each orbit of cells, so (i) follows.

(ii). For $l \in \mathbb{Z}_n$ it is easily shown that $|\langle l \rangle| = \frac{n}{{\rm gcd}(l,n)}$, so
$$
|\langle {\bf v} \rangle| = {\rm lcm}\left(\frac{n}{{\rm gcd}(l_1,n)},\frac{n}{{\rm gcd}(l_2,n)}\right) = \frac{n}{{\rm gcd}(l_1, l_2, n)}, $$
where ${\rm lcm}$ denotes the least common multiple. 

(iii). By (ii), the exponent in (i) becomes
\begin{equation*}
\frac{n^2}{|\langle {\bf v} \rangle |} = \frac{n^2}{\frac{n}{{\rm gcd}(l_1,l_2,n)}} = n \, {\rm gcd}(l_1,l_2,n)
\end{equation*}
as desired. \hfill $\Box$
\end{proof}


\begin{lemma}
\label{lemma:symmetric-neighborhood-2d} 
Fix any non-zero vector ${\bf v} \in \mathbb{Z}_n \times \mathbb{Z}_n$ and any shift-symmetric square configuration ${\bf s} \in S_{n \times n} ({\bf v})$.  Then for any ${\bf w} \in \mathbb{Z}_n \times \mathbb{Z}_n$, the neighborhoods satisfy
$$\eta_{\bf w}({\bf s}) = \eta_{{\bf w} \oplus {\bf v}}({\bf s}).$$
\end{lemma}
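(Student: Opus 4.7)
The plan is to unpack the neighborhood function using the uniformity of the CA and then apply the shift-symmetry condition componentwise. Since the CA is uniform, each cell uses the same neighborhood template: there exist fixed offset vectors ${\bf d}_1,\dots,{\bf d}_b \in \mathbb{Z}_n \times \mathbb{Z}_n$ (independent of the cell) such that for every ${\bf x} \in \mathbb{Z}_n \times \mathbb{Z}_n$,
\[
\eta_{\bf x}({\bf s}) = \bigl(s_{{\bf x} \oplus {\bf d}_1},\, s_{{\bf x} \oplus {\bf d}_2},\, \dots,\, s_{{\bf x} \oplus {\bf d}_b}\bigr).
\]
This is really just a formal statement of what ``uniform neighborhood function'' means on a toroidal lattice, and it is the only property of $\eta$ we will need.

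Next, I would apply this description at both ${\bf x} = {\bf w}$ and ${\bf x} = {\bf w} \oplus {\bf v}$, and then compare the two $b$-tuples coordinate by coordinate. For each $i \in \{1,\dots,b\}$, set ${\bf u} = {\bf w} \oplus {\bf d}_i$. Shift-symmetry (Definition~\ref{def:symmetric-config-set-2d-l1_l2}) gives $s_{\bf u} = s_{{\bf u} \oplus {\bf v}}$, i.e.\
\[
s_{{\bf w} \oplus {\bf d}_i} \;=\; s_{{\bf w} \oplus {\bf d}_i \oplus {\bf v}} \;=\; s_{({\bf w} \oplus {\bf v}) \oplus {\bf d}_i},
\]
using commutativity and associativity of $\oplus$ in $\mathbb{Z}_n \times \mathbb{Z}_n$ in the last equality. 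Since this holds for every coordinate, the two tuples $\eta_{\bf w}({\bf s})$ and $\eta_{{\bf w} \oplus {\bf v}}({\bf s})$ agree, establishing the claim.

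There is really no obstacle in this proof: the entire content is bookkeeping combining the uniform-neighborhood hypothesis with the defining relation of $S_{n \times n}({\bf v})$. The only subtlety worth flagging is to make sure the neighborhood is understood as an \emph{ordered} tuple with positions indexed by the fixed offsets ${\bf d}_i$, so that the equality of tuples (not merely of multisets) genuinely follows; this is what justifies later using $\eta_{\bf w}({\bf s})$ and $\eta_{{\bf w} \oplus {\bf v}}({\bf s})$ interchangeably as inputs to the transition rule $\phi$, which will presumably be the next step in the paper to deduce that $\Phi({\bf s})$ inherits the same shift-symmetry as ${\bf s}$.
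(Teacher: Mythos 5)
Your proof is correct and is essentially the paper's argument: both unpack the uniform neighborhood as a fixed tuple of offsets and apply the defining relation $s_{\bf u}=s_{{\bf u}\oplus{\bf v}}$ at each offset, using associativity/commutativity of $\oplus$. The only cosmetic difference is that you argue directly while the paper phrases the same componentwise comparison as a proof by contradiction.
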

\begin{proof}
Suppose the neighborhood function, which is uniformly shared by all cells, is defined by (relative) vectors ${\bf u}_1, \ldots, {\bf u}_m$, i.e., $\eta_{\bf w}({\bf s}) = (s_{{\bf w} \oplus {\bf u}_1}, \ldots, s_{{\bf w} \oplus {\bf u}_m})$ and assume the lemma does not hold, i.e., there exists ${\bf w}$ for which $\eta_{\bf w}({\bf s}) \neq \eta_{{\bf w} \oplus {\bf v}}({\bf s})$. Then
\begin{align*}
(s_{{\bf w} \oplus {\bf u}_1},\dots,s_{{\bf w} \oplus {\bf u}_m}) \neq (s_{({\bf w} \oplus {\bf v}) \oplus {\bf u}_1},\dots,s_{({\bf w} \oplus {\bf v}) \oplus {\bf u}_m})
\end{align*}
and so there exists some ${\bf u}_j$ such that $s_{{\bf w} \oplus {\bf u}_j} \neq s_{({\bf w} \oplus {\bf v}) \oplus {\bf u}_j}$, i.e., $s_{{\bf w} \oplus {\bf u}_j} \neq s_{({\bf w} \oplus {\bf u}_j) \oplus {\bf v}}$, which contradicts the assumption that ${\bf s} \in S_{n \times n} ({\bf v})$. \hfill $\Box$
\end{proof}


\setcounter{theorem}{4}

\begin{theorem-w-back-ref}[theorem:symmetric-stays-symmetric-2d]
If ${\bf s} \in S_{n \times n} ({\bf v})$ then $\Phi(\mathbf{s}) \in S_{n \times n} ({\bf v})$ for any uniform global transition rule $\Phi$.
\end{theorem-w-back-ref}
\begin{proof}
Suppose $\mathbf{q} = \Phi(\mathbf{s})$ is not symmetric by ${\bf v}$. Then, there exists ${\bf u} \in \mathbb{Z}^n \times \mathbb{Z}^n$, such that $q_{\bf u} \neq q_{{\bf u} \oplus {\bf v}}$. By Lemma~\ref{lemma:symmetric-neighborhood-2d}, $\eta_{\bf u}({\bf s}) = \eta_{{\bf u} \oplus {\bf v}}({\bf s})$, and so 
$$q_{\bf u} = \phi(\eta_{\bf u}({\bf s})) = \phi(\eta_{{\bf u} \oplus {\bf v}}({\bf s})) = q_{{\bf u} \oplus {\bf v}},$$ which is a contradiction. \hfill $\Box$
\end{proof}

\setcounter{theorem}{9}

\begin{lemma-w-back-ref}[lemma:symmetric-config-subset-2d]
For any ${\bf u}, {\bf v} \in \mathbb{Z}_n \times \mathbb{Z}_n$
$$S_{n \times n}({\bf u}) \subseteq S_{n \times n}({\bf v}) \iff \langle {\bf v} \rangle \leq \langle {\bf u} \rangle .$$
\end{lemma-w-back-ref}
\begin{proof}
($\Rightarrow$). Suppose 
 $S_{n \times n}({\bf u}) \subseteq S_{n \times n}({\bf v})$. Then 
 $$S_{n \times n}({\bf u}) = S_{n \times n}({\bf u}) \cap S_{n \times n}({\bf v}) = S_{n \times n}(\{{\bf u}, {\bf v}\})$$
 by Corollary (Eq.)~\ref{cor:symmetric-config-set-2d-cap}. But then $|\langle {\bf u} \rangle| = |\langle {\bf u}, {\bf v} \rangle|$ by Corollary (Eq.)~\ref{cor:symmetric-config-set-2d-set-size}, which forces $\langle {\bf u} \rangle = \langle {\bf u}, {\bf v} \rangle$, so that ${\bf v} \in \langle {\bf u} \rangle$ and $\langle {\bf v} \rangle \leq \langle {\bf u} \rangle$ as desired.

($\Leftarrow$). By way of contradiction, suppose that
$S_{n \times n}({\bf u}) \not \subseteq S_{n \times n}({\bf v})$ and 
$\langle {\bf v} \rangle \leq \langle {\bf u} \rangle$. Let $\mathbf{s} \in S_{n \times n}({\bf u})$ such that $\mathbf{s} \not \in S_{n \times n}({\bf v})$. Then $\mathbf{s}$ is symmetric under ${\bf u}$ but not under ${\bf v}$. Consequently, there exists ${\bf w} \in \mathbb{Z}_n \times \mathbb{Z}_n$ such that $s_{{\bf w}} \neq s_{{\bf w} \oplus {\bf v}}$. But $\mathbf{s} \in S_{n \times n}({\bf u})$ and ${\bf v} \in \langle {\bf u} \rangle$ by assumption, so Lemma~\ref{lemma:symmetric-config-set-2d-v-group} implies that $s_{{\bf w}} = s_{{\bf w} \oplus {\bf v}}$, which is a contradiction. \hfill $\Box$
\end{proof}


\begin{lemma}
\label{lemma:minimal-subgroup}
For any prime $p$ that divides $n$ and any $i \; (0 \leq i < n)$, the cyclic group $\langle(\frac{n}{p},i\frac{n}{p}) \rangle$ is simple, i.e., it has no nontrivial proper subgroups.
\end{lemma}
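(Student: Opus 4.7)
The plan is to compute the order of the cyclic group $\langle (\tfrac{n}{p},i\tfrac{n}{p}) \rangle$ directly using Lemma \ref{lemma:symmetric-config-set-2d-l1_l2-size}(ii), show that this order equals the prime $p$, and then invoke Lagrange's theorem to conclude simplicity.

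First I would apply Lemma \ref{lemma:symmetric-config-set-2d-l1_l2-size}(ii) with the coordinates $l_1 = \tfrac{n}{p}$ and $l_2 = i\tfrac{n}{p}$, giving
$$|\langle(\tfrac{n}{p},i\tfrac{n}{p})\rangle| \;=\; \frac{n}{\gcd\!\left(\tfrac{n}{p},\,i\tfrac{n}{p},\,n\right)}.$$
Next, I would factor the common term $\tfrac{n}{p}$ out of the gcd in the denominator. Since $n = p \cdot \tfrac{n}{p}$, we have
$$\gcd\!\left(\tfrac{n}{p},\,i\tfrac{n}{p},\,n\right) \;=\; \tfrac{n}{p}\,\gcd(1,i,p) \;=\; \tfrac{n}{p},$$
because the presence of $1$ forces the inner gcd to equal $1$, independently of $i$ and $p$. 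Substituting back yields $|\langle(\tfrac{n}{p},i\tfrac{n}{p})\rangle| = p$.

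Finally, since the order of the cyclic group is the prime $p$, Lagrange's theorem forces every subgroup to have order dividing $p$, so the only subgroups are the trivial one and the full group itself. Hence $\langle(\tfrac{n}{p},i\tfrac{n}{p})\rangle$ is simple, as required. I do not anticipate any real obstacle here; the only mildly delicate step is recognizing that the gcd computation collapses thanks to the factor of $1$, which trivializes the role of $i$.
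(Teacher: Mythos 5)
Your proof is correct and follows essentially the same route as the paper: compute the order via Lemma \ref{lemma:symmetric-config-set-2d-l1_l2-size}(ii), observe it equals the prime $p$, and conclude simplicity by Lagrange's theorem. The only difference is that you spell out the gcd collapse explicitly, which the paper leaves implicit.
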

\begin{proof}
By Lemma~\ref{lemma:symmetric-config-set-2d-l1_l2-size}(ii), we see that $\langle(\frac{n}{p},i\frac{n}{p}) \rangle$ has order $p$, and by Lagrange's Theorem, any group with prime order is simple. \hfill $\Box$
\end{proof}

\begin{remark}
By swapping the coordinates, the proof applies also to each subgroup of the form $\langle(i\frac{n}{p}, \frac{n}{p}) \rangle$.
\end{remark}


\setcounter{theorem}{13}

\begin{lemma-w-back-ref}[lemma:symmetric-config-primes-2d]
Fix any natural number $n$ and let $n=\prod_{j=1}^{\omega(n)} p_j^{\alpha_j}$ be the prime factorization of $n$, where $\omega(n)$ denotes the number of distinct prime factors.
Then
$$S_{n \times n} = \bigcup_{{\bf w} \in G_n} S_{n \times n}({\bf w}),$$
where $G_n$ is defined as in Definition (Eq.) ~\ref{def:gen-sets}.
\end{lemma-w-back-ref}
\begin{proof}
($\subseteq$). 
Let $\mathbf{s} \in S_{n \times n}$, so that $\mathbf{s} \in S_{n \times n}({\bf v})$ for some nonzero ${\bf v} = (a,b) \in \mathbb{Z}_n \times \mathbb{Z}_n$. It suffices to show that $\langle {\bf w} \rangle \leq \langle {\bf v} \rangle$ for some ${\bf w} \in G_n$, since this fact, by Lemma \ref{lemma:symmetric-config-subset-2d}, implies  $S_{n \times n}({\bf v}) \subseteq S_{n \times n}({\bf w})$ and therefore $\mathbf{s} \in S_{n \times n}({\bf w})$.

Without loss of generality, we may assume ${\rm gcd}(a,b,n) = 1$. Otherwise, we simply divide everything by $d = \gcd(a,b,n)$ to obtain $\hat{{\bf v}} = (\hat{a},\hat{b})$, and $\hat{n}$, respectively. Once we show that $\langle \hat{{\bf w}} \rangle \leq \langle \hat{{\bf v}} \rangle$ for some $\hat{{\bf w}} \in G_{\hat{n}}$, we multiply throughout by $d$ to obtain the desired result.

{\bf Case 1.} Suppose ${\rm gcd}(a,n) = 1$. Then $ai \equiv_n b$ and $aj \equiv_n 1$ for some $i$, $j \in \mathbb{Z}$. Also, $n {\bf v} \equiv_n (0,0)$, so $| \langle {\bf v} \rangle |$ divides $n$.  Let $p$ be any prime divisor of $| \langle {\bf v} \rangle |$ and write $n = p m$ for some $m \in \mathbb{Z}$. Let ${\bf w} = (m, im)$ and note that ${\bf w} \in G_n(p)$. Also observe ${\bf v} = a (1,i)$ and ${\bf w} = m (1,i)$, so that
$$ mj {\bf v} = mja (1,i) = m ( 1,i ) = {\bf w}.$$
Therefore ${\bf w} \in \langle {\bf v} \rangle$ and thus $\langle {\bf w} \rangle \leq \langle {\bf v} \rangle$ as desired.

{\bf Case 2.} Suppose ${\rm gcd}(a,n) \neq 1$.  Let $p$ be any prime divisor of both $a$ and $n$, so that $a=pa'$ and $n = p n'$ for some $a', n' \in \mathbb{Z}$. Let ${\bf w} = (0,n')$ and note that ${\bf w} \in G_n(p)$. Observe that $an' = a'pn' = a'n \equiv_n 0$, so
$$n' {\bf v} = n' (a,b) = (an',bn') = (0,bn') = b {\bf w}.$$
Therefore $n' {\bf v} \in \langle {\bf w} \rangle$. But by Lemma~\ref{lemma:symmetric-config-set-2d-l1_l2-size}(ii), $| \langle {\bf w} \rangle | = p$, a prime.  So if $n' {\bf v}$ is nonzero, then it generates $\langle {\bf w} \rangle$. But $n' {\bf v}$ is indeed nonzero, since its second coordinate is $b n'$, and if $bn' \equiv_n 0$, then $n | bn'$. Dividing by $n'$, we see $p | b$.  But recall that $p$ divides $a$ and $n$, and we assumed at the beginning (without loss of generality) that $\gcd(a,b,n)=1$. So $p$ cannot divide $b$. This contradiction shows $n' {\bf v}$ is nonzero and so $n' {\bf v}$ generates $\langle {\bf w} \rangle$. Thus 
$$ {\bf w} \in \langle {\bf w} \rangle = \langle n' {\bf v} \rangle \leq \langle {\bf v} \rangle.$$
So $\langle {\bf w} \rangle \leq \langle {\bf v} \rangle$ as desired.

($\supseteq$). Immediate by Definition (Eq.)~\ref{def:symmetric-config-set-2d}.
\hfill $\Box$
\end{proof}


\begin{lemma-w-back-ref}[lemma:symmetric-config-2d-primes-linearly-independent-all]
Fix any $n \in {\mathbb{N}}$.  For any distinct ${\bf u}, {\bf v} \in G_n$, 
\begin{equation}
\label{eq:triv-int}
\tag{$\star$}
|\langle {\bf u} \rangle \cap \langle {\bf v} \rangle | = 1.
\end{equation}
\end{lemma-w-back-ref}
\begin{proof}
First, suppose that ${\bf u} \in G_n(p)$ and ${\bf v} \in G_n(q)$, where $p \not= q$. By Lemma~\ref{lemma:symmetric-config-set-2d-l1_l2-size}(i), $|\langle {\bf u} \rangle | = p$ and $ |\langle {\bf v} \rangle | =q$. Since $| \langle {\bf u} \rangle \cap \langle {\bf v} \rangle |$ must divide both of these primes, the line (\ref{eq:triv-int}) must hold as claimed.

Next, suppose ${\bf u},{\bf v} \in G_n(p)$ and write $n=\hat{n}p$ for some $\hat{n} \in {\mathbb{Z}}$. Suppose ${\bf u}=(\hat{n},i\hat{n})$ and ${\bf v}=(\hat{n},j\hat{n})$ for some $0\leq i<j < p$. If ${\bf x} \in \langle {\bf u} \rangle \cap \langle {\bf v} \rangle$ then $\exists k,l$ $(0 \leq k, l <p)$ such that ${\bf x} = k{\bf u} = l{\bf v}$. But then $(k\hat{n},ki\hat{n})=(l\hat{n},lj\hat{n}),$ so $k\hat{n} \equiv_n l\hat{n}$ and thus $k \equiv_p l$.  But also, $ki\hat{n} \equiv_n lj\hat{n}$, so that $ki \equiv_p lj$. Since $i \not\equiv_p j$, this forces $k \equiv_p 0$, so that ${\bf x}=0$ and (\ref{eq:triv-int}) must hold as claimed.

Finally, suppose ${\bf u},{\bf v} \in G_n(p)$ and suppose ${\bf u}=(0,\hat{n})$ and ${\bf v}=(\hat{n},i\hat{n})$ for some $0\leq i < p$. If ${\bf x} \in \langle {\bf u} \rangle \cap \langle {\bf v} \rangle$ then $\exists k,l$ $(0 \leq k, l <p)$ such that ${\bf x} = k{\bf u} = l{\bf v}$. But then $(0,k\hat{n})=(l\hat{n},li\hat{n}),$ so $0 \equiv_n l\hat{n}$ and thus $0 \equiv_p l$.  But also, $k\hat{n} \equiv_n li\hat{n}$, so that $k \equiv_p li$ and therefore $k \equiv_p 0$. Now ${\bf x}=0$ and (\ref{eq:triv-int}) must hold as claimed. \hfill $\Box$
\end{proof}


\begin{lemma-w-back-ref}[lemma:symmetric-config-2d-primes-linearly-independent]
Fix any $n \in {\mathbb{N}}$ and any prime divisor $p$ of $n$. Let $\hat{n}=n/p$.  Then for any distinct ${\bf u},{\bf v} \in G_n(p)$,
$$ \langle {\bf u},{\bf v} \rangle = \langle (\hat{n},0), (0,\hat{n}) \rangle.$$
In particular, $|\langle {\bf u},{\bf v} \rangle| = p^2.$
\end{lemma-w-back-ref}
\begin{proof}
($\subseteq$). First suppose ${\bf u}=(\hat{n},i\hat{n})$ and ${\bf v}=(\hat{n},j\hat{n})$ for some $0 \leq i<j<p$.  Then ${\bf u} = (\hat{n},0) + i(0,\hat{n})$ and ${\bf v} = (\hat{n},0) + j(0,\hat{n})$. So $\langle {\bf u},{\bf v} \rangle \subseteq \langle (\hat{n},0), (0,\hat{n}) \rangle$ as desired.  A similar argument holds when ${\bf u}=(\hat{n},i\hat{n})$ and ${\bf v}=(0,\hat{n})$.

($\supseteq$). Again suppose ${\bf u}=(\hat{n},i\hat{n})$ and ${\bf v}=(\hat{n},j\hat{n})$ for some $0 \leq i<j<p$. Then ${\bf u}-{\bf v} \in \langle (0, \hat{n}) \rangle$. But ${\bf u}-{\bf v} \not=0$ and $|\langle (0,\hat{n}) \rangle|=p$, so ${\bf u}-{\bf v}$ generates $\langle (0,\hat{n}) \rangle$.  Thus $(0,\hat{n}) \in \langle {\bf u}-{\bf v} \rangle \subseteq \langle {\bf u},{\bf v} \rangle$. Likewise, $(\hat{n},0) \in \langle j{\bf u}-i{\bf v} \rangle \subseteq \langle {\bf u},{\bf v} \rangle$, so the desired containment holds. A similar argument can be made when ${\bf u}=(\hat{n},i\hat{n})$ and ${\bf v}=(0,\hat{n})$, showing that $(\hat{n},0) \in \langle {\bf u}-i{\bf v} \rangle$, which implies the desired result.
\hfill $\Box$
\end{proof}

\begin{lemma-w-back-ref}[lemma:symmetric-config-overall-2d-size]
Let $n=\prod_{i=1}^k p_i^{\alpha_i}$ be the prime factorization of $n$, where $k=\omega(n)$, the number of distinct prime factors of $n$. Then
\begin{align*}
|S_{n \times n}| = \sum_{{\bf 0} \lhd {\bf v} \unlhd {\bf p}+{\bf 1}}
(-1)^{1 + |{\bf v}|} \prod_{i=1}^{k} \binom{p_i + 1}{v_i} \left| \Sigma \right|^{f({\bf v})},
\end{align*}
where ${\bf p} = (p_1,\ldots,p_k)$ and $f({\bf v}) = n^2 \prod_{i=1}^k p_i^{-\min (v_i,2)}.$
\end{lemma-w-back-ref}
\begin{proof}
By Lemma \ref{lemma:symmetric-config-primes-2d}, inclusion-exclusion, and Eq. \ref{cor:symmetric-config-set-2d-cap},
\begin{align*}
|S_{n \times n}| &= \left|\bigcup_{w \in G_n} S_{n \times n}(w)\right|
= \sum_{\emptyset \not= J \subseteq G_n} (-1)^{|J|+1}
 \left|S_{n \times n}(J)
\right| .
\end{align*}
Since $G_n = \bigcup_{j = 1}^{k} G_n(p_j)$, we have $k=\omega(n)$ sets from which to choose the elements of $J$, so
$$|S_{n \times n}|
= \sum_{\substack{
J_1 \subseteq G_n(p_1) \\
\ldots \\
J_k \subseteq G_n(p_k)}} (-1)^{1 + \sum_{i=1}^k|J_i|}
 \left| S_{n \times n}\left(\bigcup_{i=1}^k J_i \right)
\right|,$$
where the sum excludes the case when $J_i =\emptyset$ for all $i$. It follows from Eq.~\ref{cor:symmetric-config-set-2d-cap} that $S_{n\times n}(\bigcup J_i) = \bigcap S_{n \times n}(J_i)$ and so Eq.~\ref{cor:symmetric-config-set-2d-set-size} gives
$$\left| S_{n \times n}\left( \bigcup_{i=1}^k J_i \right) \right| = |\Sigma |^{\frac{n^2}{|\langle \bigcup_{i=1}^k J_i \rangle |}}.$$
But by Lemma~\ref{lemma:symmetric-config-2d-primes-linearly-independent-all} we know $| \langle J_i \rangle \cap \langle J_j \rangle |=1$ when $i \not= j$, so 
$$\left| \left\langle \bigcup_{i=1}^k J_i \right\rangle \right| = \prod_{i=1}^k | \langle J_i \rangle |.$$ 
Since $J_i \subseteq G_n(p_i)$, recall that  $\langle J_i \rangle = \langle (\frac{n}{p_i},0),(0,\frac{n}{p_i}) \rangle$ when $|J_i| \geq 2$ by Lemma \ref{lemma:symmetric-config-2d-primes-linearly-independent}. So $|\langle J_i \rangle| = 1$, $p_i$, and $p_i^2$ when $|J_i|=0$, $1$, and $\geq 2$, respectively. Therefore
\begin{align*}
 \prod_{i = 1}^k|\langle J_i \rangle|
= \prod_{i = 1}^k p_i^{\min(|J_i|,2)}
\end{align*}
Substituting all this into the expression for $|S_{n \times n}|$, we obtain
$$|S_{n \times n}|
= \sum_{\substack{
J_1 \subseteq G_n(p_1) \\
\ldots \\
J_k \subseteq G_n(p_k)}} (-1)^{1 + \sum_{i=1}^k|J_i|}\left(|\Sigma|^
 {\frac{n^2}{\prod_{i = 1}^k p_i^{\min(|J_i|,2)}}}\right)$$

Now, because the content of $J_i$ is irrelevant and we care only about the cardinality $|J_i|$, for each size $v_i = |J_i|$ we have $\binom{|G_n(p_i)|}{v_i} = \binom{p_i + 1}{v_i}$ ways of choosing $v_i$ elements from $G_n(p_i)$, which produces the final formula as required.
\hfill $\Box$
\end{proof}

\begin{lemma-w-back-ref}[lemma:symmetric-config-overall-2d-size-alternative]
Let $n=\prod_{i=1}^k p_i^{\alpha_i}$ be the prime factorization of $n$, where $k=\omega(n)$, the number of distinct prime factors of $n$. Then
an alternative counting of $|S_{n \times n}|$ is
\begin{align*}
|S_{n \times n}| = \sum_{{\bf 0} \lhd {\bf v} \unlhd {\bf 2}} 
|\Sigma|^{g({\bf v})}
\left(
\sum_{{\bf v} \unlhd {\bf u} \unlhd {\rm top}({\bf v})} 
(-1)^{1 + |{\bf u}|} \prod_{i=1}^{k}\binom{p_i + 1}{u_i} \right)
\end{align*}
where $g({\bf v}) = n^2 \prod_{i=1}^k p_i^{-v_i} \,$ and $\, {\rm top}({\bf v}) \in {\mathbb{Z}}^k$ has $i$th coordinate $${\rm top}(i) = \begin{cases} v_i &\mbox{if } v_i < 2 \\
p_i + 1 & \mbox{if } v_i = 2. \end{cases}$$
\end{lemma-w-back-ref}
\begin{proof}
We know that the exponent of each $p_i$ in $S_{n \times n}$ from Lemma \ref{lemma:symmetric-config-overall-2d-size} is at most $2$. Therefore for given $v_1, \ldots, v_k \in \{0,1,2\}$ we can combine all binomial expressions associated with $|\Sigma|^{\frac{n^2}{\prod_{i = 1}^k p_i^{v_i}}}$. If $v_i \leq 1$ then we have $\binom{p_i + 1}{v_i}$ selections from $G_n(p_i)$, and $\bigcup_{u_i = 2}^{p_i + 1} \binom{p_i + 1}{u_i}$ for $v_i = 2$. These two expressions could be generalized as $\bigcup_{u_i = v_i}^{{\rm top}(i)} \binom{p_i + 1}{u_i}$ using the \textit{top} function defined above. Therefore the total coefficient of $|\Sigma|^{\frac{n^2}{\prod_{i = 1}^k p_i^{v_i}}}$ is

$$\sum_{\substack{
v_1 \leq u_1 \leq {\rm top}(1) \\
\ldots\\
v_k \leq u_k \leq {\rm top}(k)}}
(-1)^{1 + \sum_{i = 1}^k u_i} \prod_{i=1}^k\binom{p_i + 1}{u_i}
$$
as required. \hfill $\Box$
\end{proof}

\begin{theorem-w-back-ref}[theorem:symmetric-config-overall-2d-size-final]
Let $n=\prod_{i=1}^k p_i^{\alpha_i}$ be the prime factorization of $n$, where $k=\omega(n)$, the number of distinct prime factors of $n$. Then
\begin{align*}
|S_{n \times n}| = \sum_{{\bf 0} \lhd {\bf v} \unlhd {\bf 2}}
(-1)^{1 + |{\bf v}|}
|\Sigma|^{g({\bf v})}
\prod_{i=1}^{k}r(i)
\end{align*}
where $g({\bf v}) = n^2 \prod_{i=1}^k p_i^{-v_i}$ and $$ r(i) =
\begin{cases} 
    1 &\mbox{if } v_i = 0 \\
    p_i + 1 & \mbox{if } v_i = 1 \\
    p_i & \mbox{if } v_i = 2.
\end{cases}$$
\end{theorem-w-back-ref}

\begin{proof}
For a given vector ${\bf v}$ with $v_1, \ldots, v_k \in \{0,1,2\}$ we define 
$$
b({\bf v}) =
\sum_{{\bf v} \unlhd {\bf u} \unlhd {\rm top}({\bf v})} 
(-1)^{1 + |{\bf u}|} \prod_{i=1}^{k}\binom{p_i + 1}{u_i},
$$
where ${\rm top}(i) = v_i \mbox{ if } v_i < 2 \mbox{ and } p_i + 1 \mbox{ if } v_i = 2$.

Using Lemma \ref{lemma:symmetric-config-overall-2d-size-alternative}, we are left to show that
$$
b({\bf v}) = (-1)^{1 + |{\bf v}|} \prod_{i=1}^{k}r(i).
$$
We prove it by induction on $k$. As the induction basis we choose $k = 1$, and so $n = p$, where $p$ is a prime. Since $b({\bf v}) = r(0)$ we need to confirm it equals $-1, p + 1,$ or $-p$ for three different cases of ${\bf v}$ defined by the function $r$.

If ${\bf v} = (0)$, ${\rm top}({\bf v}) = (0)$ and the only ${\bf u}$ is ${\bf u} = (0)$, which gives $b({\bf v}) = - \binom{p + 1}{0} = -1$.
If ${\bf v} = (1)$, ${\rm top}({\bf v}) = (1)$ and the only ${\bf u}$ is ${\bf u} = (1)$, and so $b({\bf v}) = \binom{p + 1}{1} = p + 1$.
If ${\bf v} = (2)$, ${\rm top}({\bf v}) = (p + 1)$ and ${\bf u}$ ranges from $(2)$ to $(p + 1)$. Therefore
\begin{align*}
b({\bf v}) &= \sum_{u_1 = 2}^{p + 1} (-1)^{1 + u_1} \binom{p + 1}{u_1}\\
&= \underbrace{\sum_{u_1 = 0}^{p + 1} (-1)^{1 + u_1} \binom{p + 1}{u_1}}_0 + \binom{p + 1}{0} - \binom{p + 1}{1}\\
&= -p
\end{align*}

For the induction step we prove:
$$
b({\bf v}) = (-1)^{1 + |{\bf v}|} \prod_{i=1}^{k}r(i) \Rightarrow
b({\bf w}) = (-1)^{1 + |{\bf w}|} \prod_{i=1}^{k + 1}r(i),
$$
where ${\bf w} = (v_1, \ldots, v_k, v_{k + 1})$.
Similarly to the induction basis we need to consider three cases for $v_{k + 1}$:\\

\noindent If $v_{k + 1} = 0$
\begin{align*}
b({\bf w}) &= \binom{p_{k + 1} + 1}{0} \underbrace{\sum_{{\bf v} \unlhd {\bf u} \unlhd {\rm top}({\bf v})} (-1)^{1 + |{\bf u}|} \prod_{i=1}^{k}\binom{p_i + 1}{u_i}}_{b({\bf v})} \\
&= (-1)^{1 + |{\bf v}|} \prod_{i=1}^{k}r(i) \mbox{\quad by induction step}\\
&= (-1)^{1 + |{\bf w}|} \prod_{i=1}^{k + 1}r(i)
\end{align*}
\noindent If $v_{k + 1} = 1$
\begin{align*}
b({\bf w}) &= - \binom{p_{k + 1} + 1}{1} \underbrace{\sum_{{\bf v} \unlhd {\bf u} \unlhd {\rm top}({\bf v})} (-1)^{1 + |{\bf u}|} \prod_{i=1}^{k}\binom{p_i + 1}{u_i}}_{b({\bf v})}\\
&= - (p_{k + 1} + 1) (-1)^{1 + |{\bf v}|} \prod_{i=1}^{k}r(i) \mbox{\quad by induction step} \\
&= (-1)^{1 + |{\bf w}|} \prod_{i=1}^{k + 1}r(i)
\end{align*}
\noindent If $v_{k + 1} = 2$
\begin{align*}
b({\bf w}) &= \underbrace{\sum_{u_{k + 1} = 2}^{p_{k + 1} + 1} (-1)^{u_{k + 1}} \binom{p_{k + 1} + 1}{u_{k + 1}}}_{p_{k + 1}} b({\bf v}) \\
&= p_{k + 1} (-1)^{1 + |{\bf v}|} \prod_{i=1}^{k}r(i) \mbox{\quad by induction step} \\
&= (-1)^{1 + |{\bf w}|} \prod_{i=1}^{k + 1}r(i) \mbox{\hspace{120pt} $\Box$}
\end{align*}
\end{proof}

\begin{corollary-w-back-ref}[cor:symmetric-config-overall-2d-prime-size]
Let $n=p$, where $p$ is a prime. Then
\begin{align*}
|S_{n \times n}| = |\Sigma|^n(n + 1) - |\Sigma|n. 
\end{align*}
\end{corollary-w-back-ref}
\begin{proof}
For ${\bf v} = (1)$, $g({\bf v}) = n$ and $b({\bf v}) = (n + 1)$, and for ${\bf v} = (2)$, $g({\bf v}) = 1$ and $b({\bf v}) = - n$. \hfill $\Box$
\end{proof}


\begin{lemma}
\label{lemma:symmetric-config-cummulative-prime-inequality}
Let $n=\prod_{i=1}^k p_i^{\alpha_i}$ be the prime factorization of $n$, where $k=\omega(n)$, the number of distinct prime factors of $n$, and for each $m\, (1 \leq m \leq k)$, let 

$$
q_{n \times n}^m =  \sum_{{\bf 0} \lhd {\bf v} \unlhd {\bf 2}}
(-1)^{1 + |{\bf v}|}
|\Sigma|^{g({\bf v})}
\prod_{i=1}^{m}r(i),
$$

where ${\bf v} \in {\mathbb{Z}}^m$ and $g({\bf v})$ and $r(i)$ are defined as before. Note that $|S_{n \times n}| = q_{n \times n}^k$. Then for $m < k$

$$q_{n \times n}^m \leq q_{n \times n}^{m + 1}.$$

\end{lemma}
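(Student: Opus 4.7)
The strategy is to interpret $q_{n\times n}^m$ as the cardinality of a union of shift-symmetric configuration sets, and then deduce the desired inequality from the monotonicity of set unions. Define the partial generator set $G_n^{(m)} = \bigcup_{j=1}^m G_n(p_j)$ consisting of all generators built from the first $m$ distinct prime factors of $n$. I will show that
$$q_{n\times n}^m = \left| \bigcup_{{\bf w} \in G_n^{(m)}} S_{n\times n}({\bf w}) \right|.$$
Once this identity is in hand, the inclusion $G_n^{(m)} \subseteq G_n^{(m+1)}$ yields $q_{n\times n}^m \leq q_{n\times n}^{m+1}$ immediately, and taking $m = k$ recovers $|S_{n\times n}|$ by Lemma~\ref{lemma:symmetric-config-primes-2d}.

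To establish the identity I apply inclusion-exclusion to the union. For any non-empty $T \subseteq G_n^{(m)}$, setting $T_j = T \cap G_n(p_j)$, the extension of Corollary~\ref{cor:symmetric-config-set-2d-cap} to arbitrary intersections gives $\bigcap_{{\bf w} \in T} S_{n\times n}({\bf w}) = S_{n\times n}(T)$, whose size is $|\Sigma|^{n^2/|\langle T\rangle|}$ by Corollary~\ref{cor:symmetric-config-set-2d-set-size}. Lemmas~\ref{lemma:symmetric-config-2d-primes-linearly-independent-all} and \ref{lemma:symmetric-config-2d-primes-linearly-independent} imply that the subgroups generated by $T_j$ for distinct $j$ intersect trivially and that $|\langle T_j \rangle| = p_j^{\min(|T_j|,2)}$, so $|\langle T \rangle| = \prod_{j=1}^m p_j^{\min(|T_j|,2)}$. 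Grouping non-empty subsets by the vector of sizes ${\bf v} = (|T_1|,\ldots,|T_m|)$ recovers exactly the formula of Lemma~\ref{lemma:symmetric-config-overall-2d-size}, but with the product restricted to the first $m$ primes.

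The remaining step is the algebraic simplification from this inclusion-exclusion expression to the stated closed form for $q_{n\times n}^m$. The appendix derivations of Lemma~\ref{lemma:symmetric-config-overall-2d-size-alternative} and Theorem~\ref{theorem:symmetric-config-overall-2d-size-final} rely only on grouping the exponents $|\Sigma|^{f({\bf v})}$ by the vector $(\min(v_i,2))_i$ and then evaluating the resulting inner binomial sum coordinate by coordinate to $r(i)$. These manipulations are performed entirely index by index and are indifferent to how many primes lie in the index set, so they apply verbatim with the product ranging over $1 \leq i \leq m$ rather than $1 \leq i \leq k$.

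\textbf{Main obstacle.} The principal work is checking that the appendix reduction to the compact form $\sum (-1)^{1+|{\bf v}|} |\Sigma|^{g({\bf v})} \prod r(i)$ really is invariant under restricting the prime index set---that is, that no step secretly uses a telescoping that involves all primes of $n$. Once it is verified that each simplification (grouping by $\min$, factoring the inner binomial sum into coordinate factors, and collapsing each factor to $r(i)$) is strictly local in the index $i$, the inequality $q_{n\times n}^m \leq q_{n\times n}^{m+1}$ is a one-line consequence of $G_n^{(m)} \subseteq G_n^{(m+1)}$ together with monotonicity of cardinality under inclusion.
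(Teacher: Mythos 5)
Your proof is correct, and it takes a genuinely different route from the paper. You reinterpret $q_{n \times n}^m$ as the cardinality of the union $\bigl|\bigcup_{{\bf w} \in G_n^{(m)}} S_{n \times n}({\bf w})\bigr|$ over the partial generator set $G_n^{(m)} = \bigcup_{j=1}^m G_n(p_j)$, and then the inequality falls out of $G_n^{(m)} \subseteq G_n^{(m+1)}$ and monotonicity of cardinality. The identification is legitimate: the inclusion--exclusion computation in the proof of Lemma~\ref{lemma:symmetric-config-overall-2d-size} uses only pairwise trivial intersection of generator subgroups (Lemma~\ref{lemma:symmetric-config-2d-primes-linearly-independent-all}) and the within-prime structure of $\langle J_i \rangle$ (Lemma~\ref{lemma:symmetric-config-2d-primes-linearly-independent}), both of which hold unchanged on any subfamily of the primes, and the induction in the proof of Theorem~\ref{theorem:symmetric-config-overall-2d-size-final} is on the number of primes and so applies verbatim with $m$ in place of $k$. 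The paper instead proves the inequality purely algebraically: it decomposes $q_{n \times n}^{m+1} = x_0 + x_1 + x_2 + y_1 + y_2$ with $x_0 = q_{n \times n}^m$ and establishes $x_1 + y_1 + y_2 \geq 0$ and $x_2 \geq 0$ through a chain of estimates using $|\Sigma| \geq 2$, $3^m \leq |\Sigma|^{2m}$, and $m < \log_2(n)$. Your argument is shorter, conceptually cleaner, and yields the extra structural fact that each partial sum $q_{n \times n}^m$ counts an actual family of shift-symmetric configurations; the paper's heavier decomposition, on the other hand, is not wasted effort, since the same five-part split is reused to derive the upper bound in Lemma~\ref{lemma:symmetric-config-upper-inequality}, which your set-theoretic argument does not give.
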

\begin{proof}
Let ${\bf v} = (v_1, \ldots, v_m)$ and ${\bf w} = (v_1, \ldots, v_m, v_{m + 1})$. Then
\begin{align*}
q_{n \times n}^{m + 1} =& \sum_{{\bf 0} \lhd {\bf w} \unlhd {\bf 2}}
(-1)^{1 + |{\bf w}|}
|\Sigma|^{g({\bf w})}
\prod_{i=1}^{m + 1}r(i) \\
= & \sum_{v_{m + 1} = 0}^2 \biggl((-1)^{v_{m + 1}} r(m + 1)\\
& \hspace{40pt} \sum_{{\bf 0} \lhd {\bf v} \unlhd {\bf 2}}
(-1)^{1 + |{\bf v}|}
|\Sigma|^{g({\bf v})p_{m + 1}^{-v_{m+1}}}
\prod_{i=1}^{m}r(i)\biggl) \\
& + \sum_{v_{m + 1} = 1}^2 (-1)^{1 + v_{m + 1}} |\Sigma|^{n^2 p_{m + 1}^{-v_{m+1}}} r(m + 1)
\end{align*}

We split the expression into five parts:

$$ q_{n \times n}^{m + 1} = x_0 + x_1 + x_2 + y_1 + y_2$$
and define, for any $c \in \mathbb{R}$

$$
q_{n \times n}^m(c) =  \sum_{{\bf 0} \lhd {\bf v} \unlhd {\bf 2}}
(-1)^{1 + |{\bf v}|}
|\Sigma|^{g({\bf v})c}
\prod_{i=1}^{m}r(i),
$$

i.e., $q_{n \times n}^m = q_{n \times n}^m(1)$. Then

\begin{align*}
x_{0} &= q_{n \times n}^m \\
x_{1} &= - (p_{m + 1} + 1) q_{n \times n}^m(p_{m + 1}^{-1})\\
x_{2} &= p_{m + 1} q_{n \times n}^m(p_{m + 1}^{-2}) \\
y_{1} &= |\Sigma|^{n^2 p_{m + 1}^{-1}} (p_{m + 1} + 1)\\
y_{2} &= - |\Sigma|^{n^2 p_{m + 1}^{-2}} p_{m + 1}
\end{align*}

Now we show that

\begin{align*}
y_{1} + x_{1} + y_{2} \geq 0
\end{align*}

Let $A = n^2 p_{m + 1}^{-1}$. Then $(y_{1} + x_{1} + y_{2})(p_{m + 1} + 1)^{-1} $\\
\begin{align*}
& \geq
|\Sigma|^A - q_{n \times n}^m(p_{m + 1}^{-1}) -|\Sigma|^\frac{A}{p_{m+1}} \\
&\geq
|\Sigma|^A - \sum_{{\bf 0} \lhd {\bf v} \unlhd {\bf 2}}
|\Sigma|^{g({\bf v}) p_{m + 1}^{-1}}
\prod_{i=1}^{m}r(i)  -|\Sigma|^\frac{A}{p_{m+1}} \\
& =
|\Sigma|^A - \sum_{{\bf 0} \lhd {\bf v} \unlhd {\bf 2}}
|\Sigma|^{\frac{A}{\prod_{i=1}^{m}p_i^{v_i}}}
\prod_{i=1}^{m}r(i)  -|\Sigma|^\frac{A}{p_{m+1}} \\
&\geq
|\Sigma|^A - \prod_{i=1}^{m}(p_i + 1) \sum_{{\bf 0} \lhd {\bf v} \unlhd {\bf 2}}
|\Sigma|^{\frac{A}{\prod_{i=1}^{m}p_i^{v_i}}}
 -|\Sigma|^\frac{A}{p_{m+1}} \mbox{\footnotesize \hspace{10pt} $r(i) \leq p_i + 1$}\\
&\geq
|\Sigma|^A - n 2^m \sum_{{\bf 0} \lhd {\bf v} \unlhd {\bf 2}}
|\Sigma|^{\frac{A}{\prod_{i=1}^{m}p_i^{v_i}}}
 -|\Sigma|^\frac{A}{p_{m+1}} \mbox{\footnotesize \hspace{8pt} $p_i + 1 \leq 2 p_i$ and $\prod_{i=1}^{m}p_i < n$ }\\
& \geq 
|\Sigma|^A 
- n 2^m \sum_{{\bf 0} \lhd {\bf v} \unlhd {\bf 2}} |\Sigma|^{\frac{A}{p_l}}
-|\Sigma|^\frac{A}{p_{m+1}} \mbox{\footnotesize \hspace{27pt} $p_l = \min\{p_1,\ldots, p_m\}$}\\
& = 
|\Sigma|^A 
- 2^{\log_2(n) + m} \sum_{{\bf 0} \lhd {\bf v} \unlhd {\bf 2}} |\Sigma|^{\frac{A}{p_l}} 
-|\Sigma|^\frac{A}{p_{m+1}}\\
&\geq 
|\Sigma|^A 
- 2^{\log_2(n) + m} 3^m |\Sigma|^{\frac{A}{p_l}}
-|\Sigma|^\frac{A}{p_{m+1}} 
\mbox{\footnotesize \hspace{40pt} $|{\bf v}| = m$} \\
&\geq 
|\Sigma|^A 
- |\Sigma|^{\log_2(n) + m + 2m} |\Sigma|^{\frac{A}{p_l}} 
-|\Sigma|^\frac{A}{p_{m+1}} \mbox{\footnotesize \hspace{30pt} $|\Sigma| \geq 2$}\\
&\geq 
|\Sigma|^A - |\Sigma|^{3m + \log_2(n) + \frac{A}{p_l}} 
-|\Sigma|^\frac{A}{p_{m+1}}\\
&\geq 
|\Sigma|^A - |\Sigma|^{4\log_2(n) + \frac{A}{p_l}}
-|\Sigma|^\frac{A}{p_{m+1}} \mbox{\footnotesize \hspace{58pt} $m \leq k - 1 < \log_2(n)$ \hspace{40pt}}\\
&\geq 
|\Sigma|^A - |\Sigma|^{4\log_2(n) + \frac{A}{2}}
-|\Sigma|^\frac{A}{2} \mbox{\footnotesize \hspace{73pt} $p_l, p_{m + 1} \geq 2$}\\
&\geq 
|\Sigma|^A - 2|\Sigma|^{4\log_2(n) + \frac{A}{2}}\\
&\geq 
|\Sigma|^A - |\Sigma|^{4\log_2(n) + \frac{A}{2} + 1} \mbox{\footnotesize \hspace{90pt} $|\Sigma| \geq 2$}\\
&\geq 0.
\end{align*}


Since $x_2$ is non-negative we can conclude that

$$
q_{n \times n}^{m + 1} = \underbrace{x_0}_{q_{n \times n}^m} + \underbrace{x_1 + y_1 + y_2}_{\geq 0} + \underbrace{x_2}_{\geq 0} \geq q_{n \times n}^m \mbox{\hspace{30pt} $\Box$}
$$
\end{proof}


\setcounter{theorem}{20}

\begin{lemma-w-back-ref}[lemma:symmetric-config-lower-bound]
\begin{align*}
|\Sigma|^n(n + 1) - |\Sigma|n \leq |S_{n \times n}|,
\end{align*}
where equality holds if and only if $n$ is a prime.
\end{lemma-w-back-ref}
\begin{proof}
If $k = 1$, i.e., $n$ is a prime, the equality holds as shown in Corollary \ref{cor:symmetric-config-overall-2d-prime-size}. If $k > 1$ using Lemma \ref{lemma:symmetric-config-cummulative-prime-inequality} and $p_1 < n, p_1 \leq \frac{n}{2}$

\begin{align*}
|S_{n \times n}| &= q_{n \times n}^k \geq q_{n \times n}^{k - 1} \geq \ldots \geq q_{n \times n}^1 \\
& = |\Sigma|^{n^2 p_1^{-1}}(p_1 + 1) - |\Sigma|^{n^2 p_1^{-2}}p_1\\
& > |\Sigma|^{n^2 n^{-1}}(n + 1) - |\Sigma|^{n^2 n^{-2}}n \mbox{\hspace{60pt} $\Box$}
\end{align*}
\end{proof}


\begin{lemma}
\label{lemma:symmetric-config-upper-inequality}
Let $n=\prod_{i=1}^k p_i^{\alpha_i}$ be the prime factorization of $n$, where $k=\omega(n)$, the number of distinct prime factors of $n$. Then

$$
|S_{n\times n}| \leq 2 \sum_{i = 1}^k |\Sigma|^{n^2 p_i^{-1}}(p_i + 1).
$$

\end{lemma}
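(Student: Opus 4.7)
The plan is to establish this upper bound via a straightforward union bound applied to the decomposition provided by Lemma \ref{lemma:symmetric-config-primes-2d}, which writes
$$S_{n\times n} \;=\; \bigcup_{{\bf w}\in G_n} S_{n\times n}({\bf w}).$$
Passing to cardinalities gives $|S_{n\times n}| \leq \sum_{{\bf w}\in G_n} |S_{n\times n}({\bf w})|$, so the task reduces to evaluating each summand uniformly across the pieces $G_n(p_j)$.

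The key intermediate step is to compute $|\langle {\bf w}\rangle|$ for an arbitrary ${\bf w}\in G_n(p_j)$. By Definition \ref{def:gen-sets} such a ${\bf w}$ is either $(0,n/p_j)$ or $(n/p_j,\,i\,n/p_j)$ for some $0\leq i\leq p_j-1$; in every case the coordinates are multiples of $n/p_j$ and at least one coordinate equals $n/p_j$, so $\gcd$ of the two coordinates together with $n$ is exactly $n/p_j$. Lemma \ref{lemma:symmetric-config-set-2d-l1_l2-size}(ii) then yields $|\langle {\bf w}\rangle|=p_j$, and part~(i) of the same lemma gives $|S_{n\times n}({\bf w})| = |\Sigma|^{n^2/p_j}$.

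Since the sets $G_n(p_j)$ for distinct primes are disjoint (the first nonzero coordinate forces the prime) and satisfy $|G_n(p_j)| = p_j + 1$, grouping the union bound by prime index yields
$$|S_{n\times n}| \;\leq\; \sum_{j=1}^k (p_j+1)\,|\Sigma|^{n^2/p_j} \;\leq\; 2\sum_{j=1}^k (p_j+1)\,|\Sigma|^{n^2/p_j},$$
which is the desired inequality. I do not foresee any real obstacle: all the structural work has already been done in Lemma \ref{lemma:symmetric-config-primes-2d} and Lemma \ref{lemma:symmetric-config-set-2d-l1_l2-size}, so the argument amounts to a union bound followed by bookkeeping on orbit sizes. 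The factor of $2$ is merely a harmless slack absorbing the over-counting inherent in the union bound; in fact the argument delivers the stronger bound without this factor, but I would present it as stated for consistency with the claim.
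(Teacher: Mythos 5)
Your proof is correct, and it takes a genuinely different and more elementary route than the paper. The paper derives this bound analytically from its closed-form enumeration: it reuses the five-term decomposition $q_{n\times n}^{m+1}=x_0+x_1+x_2+y_1+y_2$ from Lemma \ref{lemma:symmetric-config-cummulative-prime-inequality}, proves $y_1 \geq x_1+x_2+y_2$ by a chain of exponent estimates, obtains the recursion $q_{n\times n}^{m+1}\leq q_{n\times n}^{m}+2|\Sigma|^{n^2p_{m+1}^{-1}}(p_{m+1}+1)$, and unrolls it; the factor of $2$ is forced by that absorption step. You instead apply a union bound directly to the covering $S_{n\times n}=\bigcup_{{\bf w}\in G_n}S_{n\times n}({\bf w})$ of Lemma \ref{lemma:symmetric-config-primes-2d}, and your orbit computation is right: every ${\bf w}\in G_n(p_j)$ has both coordinates multiples of $n/p_j$ with one equal to $n/p_j$, so $\gcd(l_1,l_2,n)=n/p_j$, Lemma \ref{lemma:symmetric-config-set-2d-l1_l2-size} gives $|S_{n\times n}({\bf w})|=|\Sigma|^{n^2/p_j}$, and $|G_n(p_j)|=p_j+1$ finishes the count. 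This yields the stated inequality with the constant $1$ in place of $2$, i.e.\ a strictly stronger bound, at essentially no cost. The only thing the paper's heavier route buys is that it lives entirely inside the $q^m$ formalism already set up for the lower bound (Lemma \ref{lemma:symmetric-config-cummulative-prime-inequality}); your argument is independent of that machinery and, in my view, preferable.
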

\begin{proof}
As in the proof of Lemma \ref{lemma:symmetric-config-cummulative-prime-inequality} we employ the function $q_{n \times n}$, which can be decomposed into five parts as defined earlier

$$ q_{n \times n}^{m + 1} = x_0 + x_1 + x_2 + y_1 + y_2$$

Now we show that

\begin{align*}
y_{1} \geq x_1 + x_2 + y_2
\end{align*}

Let $A = n^2 p_{m + 1}^{-1}$. Then $(y_1 - x_1 - x_2 - y_2)(p_{m + 1} + 1)^{-1} $\\
\begin{align*}
& \geq
|\Sigma|^A + \underbrace{q_{n \times n}^m(p_{m + 1}^{-1})}_{\geq 0} - q_{n \times n}^m(p_{m + 1}^{-2}) +\underbrace{|\Sigma|^\frac{A}{p_{m+1}}}_{\geq 0} \\
& \geq
|\Sigma|^A - q_{n \times n}^m(p_{m + 1}^{-2}) \\
&\geq
|\Sigma|^A - \sum_{{\bf 0} \lhd {\bf v} \unlhd {\bf 2}}
|\Sigma|^{g({\bf v}) p_{m + 1}^{-2}}
\prod_{i=1}^{m}r(i) \\
& =
|\Sigma|^A - \sum_{{\bf 0} \lhd {\bf v} \unlhd {\bf 2}}
|\Sigma|^{\frac{A}{\prod_{i=1}^{m}p_i^{v_i}p_{m + 1}}}
\prod_{i=1}^{m}r(i)\\
&\geq
|\Sigma|^A - \prod_{i=1}^{m}(p_i + 1) \sum_{{\bf 0} \lhd {\bf v} \unlhd {\bf 2}}
|\Sigma|^{\frac{A}{\prod_{i=1}^{m}p_i^{v_i}p_{m + 1}}} \mbox{\footnotesize \hspace{10pt} $r(i) \leq p_i + 1$}\\
&\geq
|\Sigma|^A - n 2^m \sum_{{\bf 0} \lhd {\bf v} \unlhd {\bf 2}}
|\Sigma|^{\frac{A}{\prod_{i=1}^{m}p_i^{v_i}p_{m + 1}}} \mbox{\footnotesize \hspace{8pt} $p_i + 1 \leq 2 p_i$ and $\prod_{i=1}^{m}p_i < n$ }\\
& \geq 
|\Sigma|^A 
- n 2^m \sum_{{\bf 0} \lhd {\bf v} \unlhd {\bf 2}} |\Sigma|^{\frac{A}{p_l p_{m + 1}}} \mbox{\footnotesize \hspace{27pt} $p_l = \min\{p_1,\ldots, p_m\}$}\\
& = 
|\Sigma|^A 
- 2^{\log_2(n) + m} \sum_{{\bf 0} \lhd {\bf v} \unlhd {\bf 2}} |\Sigma|^{\frac{A}{p_l p_{m + 1}}}\\
&\geq 
|\Sigma|^A - 2^{\log_2(n) + m}3^m
|\Sigma|^{\frac{A}{p_l p_{m+1}}} \mbox{\footnotesize \hspace{25pt} $|{\bf v}| = m$}\\
&\geq 
|\Sigma|^A - |\Sigma|^{\log_2(n) + m + 2m}
|\Sigma|^{\frac{A}{p_l p_{m+1}}} \mbox{\footnotesize \hspace{15pt} $|\Sigma| \geq 2$}
\end{align*}
\begin{align*}
&\geq 
|\Sigma|^A - |\Sigma|^{3m + \log_2(n)  + \frac{A}{p_l p_{m+1}}}\\
&\geq 
|\Sigma|^A - |\Sigma|^{4\log_2(n) + \frac{A}{p_l p_{m+1}}} \mbox{\footnotesize \hspace{24pt} $m \leq k - 1 < \log_2(n)$}\\
&\geq 
|\Sigma|^A - |\Sigma|^{4\log_2(n) + \frac{A}{6}}\mbox{\footnotesize \hspace{42pt} $p_l, p_{m + 1} \geq 2, p_l \neq p_{m + 1}$}\\
&\geq 0.
\end{align*}

Since $y_{1} \geq x_1 + x_2 + y_2$

$$
q_{n \times n}^{m + 1} = x_0 + x_1 + x_2 + y_1 + y_2 \leq x_0 + 2y_1.
$$

By substituting $x_0$ and $y_1$ we obtain a recursive inequality

\begin{align*}
q_{n \times n}^{m + 1} &\leq q_{n \times n}^m + 2
|\Sigma|^{n^2 p_{m + 1}^{-1}} (p_{m + 1} + 1) \\
&\leq q_{n \times n}^{m -1} + 2
|\Sigma|^{n^2 p_m^{-1}} (p_m + 1) + 2|\Sigma|^{n^2 p_{m + 1}^{-1}} (p_{m + 1} + 1)\\
&\ldots\\
&\leq 2 \sum_{i = 1}^{m + 1} |\Sigma|^{n^2 p_i^{-1}}(p_i + 1)
\end{align*}

To finalize the proof we use $|S_{n \times n}| = q_{n \times n}^k$. \hfill $\Box$
\end{proof}


\begin{lemma}
\label{lemma:symmetric-config-upper-inequality-prime}
Let $p$ be a prime divisor of $n$. Then
$$
|\Sigma|^{{n^2}p^{-1}} (p + 1) \leq |\Sigma|^{{n^2}(p - 1)^{-1}} p.
$$
\end{lemma}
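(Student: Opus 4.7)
The plan is to rearrange the inequality by dividing both sides by $|\Sigma|^{n^2/p}$, using the identity $\tfrac{1}{p-1}-\tfrac{1}{p}=\tfrac{1}{p(p-1)}$ in the exponent, to reduce the claim to the equivalent statement
\[
\frac{p+1}{p} \;\leq\; |\Sigma|^{n^2/(p(p-1))}.
\]

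The key observation is that $p \mid n$ forces $n \geq p$, so the exponent satisfies $n^2/(p(p-1)) \geq p/(p-1) \geq 1$ for every prime $p \geq 2$. Combining this with the standing assumption $|\Sigma| \geq 2$ (a nontrivial CA alphabet has at least two states) yields
\[
|\Sigma|^{n^2/(p(p-1))} \;\geq\; 2^{p/(p-1)} \;\geq\; 2.
\]

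Finally, the left-hand side of the reduced inequality satisfies $\tfrac{p+1}{p} = 1 + \tfrac{1}{p} \leq \tfrac{3}{2} < 2$ for every integer $p \geq 2$, which closes the chain and establishes the claim. The whole argument is a short monotonicity chase with no real obstacle; the tightest case is $p=2$, where the auxiliary bound $2^{p/(p-1)}=4$ still comfortably dominates $3/2$, and the gap only widens for larger primes since $(p+1)/p$ decreases while the exponent $n^2/(p(p-1))$ grows rapidly with $n$.
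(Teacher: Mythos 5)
Your proof is correct and follows essentially the same route as the paper's: both reduce the claim to showing that $|\Sigma|^{n^2/(p(p-1))}\cdot p \geq p+1$, and both establish this by noting that $p\mid n$ forces the exponent to be at least $1$, so the power of $|\Sigma|\geq 2$ is at least $2$, which comfortably beats $(p+1)/p$. The only difference is cosmetic — you divide through by $|\Sigma|^{n^2/p}$ while the paper factors it out and checks the difference is nonnegative.
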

\begin{proof}
Let $B = |\Sigma|^{{n^2}p^{-1}}$. Then

\begin{align*}
B^{\frac{p}{p - 1}}p - B(p + 1) & = B(B^{\frac{1}{p - 1}}p - (p + 1)) \\
& \geq B(|\Sigma|^{\frac{n}{p - 1}}p - (p + 1)) \\
& \geq B(2^{\frac{n}{p - 1}}p - (p + 1)) \mbox{\footnotesize \hspace{25pt} $|\Sigma| \geq 2$}\\
& \geq B(2p - (p + 1)) \mbox{\footnotesize \hspace{40pt} $\frac{n}{p - 1} \geq 1$}\\
& \geq 0 \mbox{\hspace{135pt} $\Box$}
\end{align*}
\end{proof}


\begin{corollary}
\label{cor:symmetric-config-upper-inequality-2}
Let $p$ be a prime of $n$. Then
$$
|\Sigma|^{{n^2}p^{-1}} (p + 1) \leq 3|\Sigma|^{\frac{n^2}{2}}.
$$

\end{corollary}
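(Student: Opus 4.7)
The plan is to reduce the inequality to an elementary form and then split into cases by the size of $p$. Dividing both sides of the target inequality by $|\Sigma|^{n^2/p}$, one sees that the claim is equivalent to
\begin{equation*}
\frac{p+1}{3} \;\leq\; |\Sigma|^{n^2/2 - n^2/p} \;=\; |\Sigma|^{n^2(p-2)/(2p)}.
\end{equation*}
For $p = 2$ the exponent vanishes and both sides equal $1$, giving equality. This also explains why the constant $3$ is exactly the right one in the statement: it is forced by the $p = 2$ boundary case.

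For odd primes $p \geq 3$, I would exploit the two standing hypotheses $|\Sigma| \geq 2$ and $p \mid n$ (the latter forcing $n \geq p$). Since $n^2/p \geq p$, the exponent satisfies $n^2(p-2)/(2p) \geq p(p-2)/2$, and hence
\begin{equation*}
|\Sigma|^{n^2(p-2)/(2p)} \;\geq\; 2^{p(p-2)/2}.
\end{equation*}
It then suffices to verify $2^{p(p-2)/2} \geq (p+1)/3$ for every prime $p \geq 3$. For $p = 3$ this is the explicit $2^{3/2} = 2\sqrt{2} \geq 4/3$. For $p \geq 5$, the coarse estimate $p(p-2)/2 \geq p - 2$ (equivalent to $p \geq 2$) reduces the task to $3 \cdot 2^{p-2} \geq p + 1$, an inequality immediately checked by induction on $p$ since the left side grows exponentially while the right side grows linearly.

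The main subtlety is that the preceding Lemma, $|\Sigma|^{n^2/p}(p+1) \leq |\Sigma|^{n^2/(p-1)} p$, cannot simply be invoked as a black box: combined with the trivial bound $|\Sigma|^{n^2/(p-1)} \leq |\Sigma|^{n^2/2}$ valid for $p \geq 3$, it only yields $|\Sigma|^{n^2/p}(p+1) \leq p \cdot |\Sigma|^{n^2/2}$, which has the wrong multiplicative factor $p$ rather than $3$ as soon as $p \geq 5$. Tightening the constant from $p$ to $3$ is therefore the essential content of the corollary, and it is precisely the exponential-versus-linear comparison above that closes this gap; I expect that step to be the only nontrivial obstacle in the write-up, everything else being bookkeeping.
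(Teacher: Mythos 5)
Your proof is correct, but it takes a genuinely different route from the paper's. The paper offers no written argument for this corollary: it is meant to follow from the immediately preceding lemma by iteration, chaining $|\Sigma|^{n^2/q}(q+1) \leq |\Sigma|^{n^2/(q-1)}q$ from $q=p$ down to $q=3$, which telescopes directly to $3|\Sigma|^{n^2/2}$. Your remark that the lemma ``cannot simply be invoked as a black box'' is accurate for a \emph{single} application (which, as you note, only yields the factor $p$), but you overlook that repeated application closes the gap; to be fair, the lemma as stated assumes its argument is a prime divisor of $n$, so the intermediate steps of the chain strictly require observing that the lemma's proof uses only $2 \leq q \leq n$ and $|\Sigma| \geq 2$, not primality or divisibility. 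Your direct argument --- dividing through to reach $(p+1)/3 \leq |\Sigma|^{n^2(p-2)/(2p)}$, handling $p=2$ as the equality case that forces the constant $3$, and for $p \geq 3$ bounding the right side below by $2^{p(p-2)/2}$ via $n \geq p$ and $|\Sigma| \geq 2$, then checking $3\cdot 2^{p(p-2)/2} \geq p+1$ --- is correct, self-contained, and sidesteps the hypothesis issue entirely. What the paper's route buys is brevity given the lemma already proved; what yours buys is independence from that lemma and an explicit identification of where the constant $3$ is tight.
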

\begin{proof} By induction
\begin{align*}
\Sigma|^{{n^2}p^{-1}} (p + 1) 
& \leq |\Sigma|^{{n^2}(p - 1)^{-1}} p\\
& \leq |\Sigma|^{{n^2}(p - 2)^{-1}} (p - 1) \\
& \leq \ldots \leq |\Sigma|^{{n^2}2^{-1}} 3 \mbox{\hspace{100pt} $\Box$}
\end{align*}
\end{proof}


\setcounter{theorem}{21}

\begin{lemma-w-back-ref}[lemma:symmetric-config-upper-bound]
Let $n=\prod_{i=1}^k p_i^{\alpha_i}$ be the prime factorization of $n$, where $k=\omega(n)$, the number of distinct prime factors of $n$. Then

$$
|S_{n\times n}| \leq 6 \log_2(n)|\Sigma|^\frac{n^2}{2}.
$$

\end{lemma-w-back-ref}
\begin{proof}
By Lemma \ref{lemma:symmetric-config-upper-inequality} and Corollary \ref{cor:symmetric-config-upper-inequality-2}

$$
|S_{n\times n}| \leq 2 \sum_{i = 1}^k |\Sigma|^{n^2 p_i^{-1}}(p_i + 1) \leq 6k  |\Sigma|^\frac{n^2}{2} \leq 6 \log_2(n)|\Sigma|^\frac{n^2}{2}
$$\hfill $\Box$
\end{proof}


\setcounter{theorem}{28}

\begin{lemma-w-back-ref}[lemma:symmetric-config-2d-active-cell-intersect-size]
For any $a \in \Sigma$, any $k \in \mathbb{N}$, and ${\bf v} \in \mathbb{Z}_n \times \mathbb{Z}_n$ such that $|\langle {\bf v} \rangle|$ divides $k$
\begin{align*}
\left| S^a_{n \times n,k}({\bf v}) \right| = \binom{\frac{n^2}{|\langle {\bf v} \rangle|}}{\frac{k}{|\langle {\bf v} \rangle|}}(|\Sigma| - 1)^{\frac{n^2 - k}{|\langle {\bf v} \rangle|}}.
\end{align*}
\end{lemma-w-back-ref}
\begin{proof}
Let $\mathbf{s} \in S^a_{n \times n,k}({\bf v})$. Then the number of selections of state in $\mathbf{s}$, i.e., the pattern size, is $n^2/|\langle {\bf v} \rangle|$. To enumerate the number of such configurations, we first have to 
choose $k/|\langle {\bf v} \rangle|$ out of $n^2/|\langle {\bf v} \rangle|$ sites to be in state $a$, and then fill the remaining $n^2/|\langle {\bf v} \rangle| - k/|\langle {\bf v} \rangle|$ sites with states from $\Sigma \setminus \{a \}$. 
\hfill $\Box$
\end{proof}


\setcounter{theorem}{29}

\begin{lemma-w-back-ref}[lemma:symmetric-config-2d-active-cell-size]
Pick $n,k \in {\mathbb{N}}$ with $k \leq n$ and let $d={\rm{gcd}}(k,n)$.
Let $n =\prod_{i=1}^{\omega(n)} p_i^{\alpha_i}$, $k=\prod_{i=1}^{\omega(k)} q_i^{\beta_i}$, and $d = \prod_{i=1}^{\omega(d)} r_i^{\gamma_i}$ be the prime factorizations of $n$, $k$, $d$, respectively. Then for any $a \in \Sigma$,
\begin{align*}
|S^a_{n \times n,k}| = \sum_{{\bf 0} \lhd {\bf u} \unlhd {\bf r} + {\bf 1}}
(-1)^{1 + |{\bf u}|} \left( \prod_{i=1}^{\omega(d)} \binom{r_i + 1}{u_i} \right)\\
\times \binom{
\frac{n^2}{h({\bf u})}}{
\frac{k}{h({\bf u})}} (\left| \Sigma \right|-1)^{\frac{n^2-k}{h({\bf u})}},
\end{align*}
where ${\bf r} = (r_1,\ldots,r_{\omega(d)}) \,$ and $\, h({\bf u}) = \prod_{i=1}^{\omega(d)} r_i^{\min (u_i,2)}.$
\end{lemma-w-back-ref}

\begin{proof}
Using Eq.~\ref{def:symmetric-config-2d-active-cell}, Lemma \ref{lemma:symmetric-config-primes-2d}, and Eq. \ref{cor:symmetric-config-set-2d-active-cell-nonempty},
\begin{align*}
S^a_{n\times n,k} &= \left(\bigcup_{{\bf w} \in G_n} S_{n \times n}({\bf w}) \right) \bigcap  D^a_{n \times n,k}\\
                &= \bigcup_{i = 1}^{\omega(n)} \bigcup_{{\bf w} \in G_n(p_i)} S^a_{n \times n,k}({\bf w})\\
                &= \bigcup_{i = 1}^{\omega(d)} \bigcup_{{\bf w} \in G_n(r_i)} S^a_{n \times n,k}({\bf w}).
\end{align*}
By the inclusion-exclusion principle
\begin{align*}
|S^a_{n \times n, k}|
= \sum_{\substack{
J_1 \subseteq G_n(r_1) \\
\ldots \\
J_{\omega(d)} \subseteq G_n(r_{\omega(d)})}} (-1)^{1 + \sum_{i=1}^{\omega(d)}|J_i|}
 \left| \bigcap_{{\bf w} \in \cup_i J_i} S^a_{n \times n}({\bf w})
\right|.
\end{align*}
Now, by Eq. \ref{cor:symmetric-config-set-2d-cap}
\begin{align*}
\bigcap_{{\bf w} \in \cup_{i = 1}^{\omega(d)} J_i} S^a_{n \times n,k}({\bf w}) &= \left( \bigcap_{{\bf w} \in \cup_{i = 1}^{\omega(d)} J_i} S_{n \times n}({\bf w}) \right) \cap D^a_{n \times n,k}\\
&= S_{n \times n}(\cup_{i = 1}^{\omega(d)} J_i) \cap D^a_{n \times n,k}\\
&= S^a_{n \times n,k}(\cup_{i = 1}^{\omega(d)} J_i)
\end{align*}
Finally let $m = |\langle \cup_{i = 1}^{\omega(d)} J_i \rangle|$, then using Lemma ~\ref{lemma:symmetric-config-2d-active-cell-intersect-size},
\begin{align*}
|S^a_{n \times n,k}(\cup_{i = 1}^{\omega(d)} J_i)| &= \binom{\frac{n^2}{m}}{\frac{k}{m}} (|\Sigma| - 1)^{\frac{n^2 - k}{m}},
\end{align*}
where $m = |\langle \cup_{i = 1}^{\omega(d)} J_i \rangle| = \prod_{i=1}^{\omega(d)} r_i^{\min(|J_i|,2)}$. \hfill $\Box$
\end{proof}


\begin{lemma-w-back-ref}[lemma:symmetric-config-2d-active-cell-size-alternative]
Pick $n,k \in {\mathbb{N}}$ with $k \leq n$ and let $d={\rm{gcd}}(k,n)$.
Let $n =\prod_{i=1}^{\omega(n)} p_i^{\alpha_i}$, $k=\prod_{i=1}^{\omega(k)} q_i^{\beta_i}$, and $d = \prod_{i=1}^{\omega(d)} r_i^{\gamma_i}$ be the prime factorizations of $n$, $k$, $d$, respectively. Then for any $a \in \Sigma$,
\begin{align*}
|S^a_{n \times n, k}| = \sum_{\substack{
{\bf 0} \lhd {\bf v} \unlhd {\bf 2} \\
{\bf v} \unlhd {\bf u} \unlhd {\rm top}({\bf v})}}
(-1)^{1 + |{\bf u}|}
\binom{\frac{n^2}{h({\bf v})}}{\frac{k}{h({\bf v})}}
(|\Sigma|-1)^{\frac{n^2-k}{h({\bf v})}}\\
\times \prod_{i=1}^{\omega(d)}\binom{r_i + 1}{u_i}, 
\end{align*}
where $h({\bf v}) = \prod_{i=1}^{\omega(d)} r_i^{\min(v_i,2)} \,$ and $\, {\rm top}({\bf v}) \in {\mathbb{Z}}^{\omega(d)}$ has $i$th coordinate $${\rm top}(i) = \begin{cases} v_i &\mbox{if } v_i < 2 \\
r_i + 1 & \mbox{if } v_i = 2. \end{cases}$$
\end{lemma-w-back-ref}
\begin{proof}
Similar to the proof of Lemma \ref{lemma:symmetric-config-overall-2d-size-alternative}.
\hfill $\Box$
\end{proof}


\begin{theorem-w-back-ref}[theorem:symmetric-config-2d-active-cell-size-final]
Pick $n,k \in {\mathbb{N}}$ with $k \leq n$ and let $d={\rm{gcd}}(k,n)$.
Let $n =\prod_{i=1}^{\omega(n)} p_i^{\alpha_i}$, $k=\prod_{i=1}^{\omega(k)} q_i^{\beta_i}$, and $d = \prod_{i=1}^{\omega(d)} r_i^{\gamma_i}$ be the prime factorizations of $n$, $k$, $d$, respectively. Then for any $a \in \Sigma$,
\begin{align*}
|S^a_{n \times n, k}| = \sum_{{\bf 0} \lhd {\bf v} \unlhd {\bf 2}}
(-1)^{1 + |{\bf v}|}
\binom{\frac{n^2}{h({\bf v})}}{\frac{k}{h({\bf v})}}
(|\Sigma|-1)^{\frac{n^2-k}{h({\bf v})}} \prod_{i=1}^{\omega(d)}r(i), 
\end{align*}
where $h({\bf v}) = \prod_{i=1}^{\omega(d)} r_i^{\min(v_i,2)} \,$ and $$ r(i) =
\begin{cases} 
    1 &\mbox{if } v_i = 0 \\
    p_i + 1 & \mbox{if } v_i = 1 \\
    p_i & \mbox{if } v_i = 2.
\end{cases}$$
\end{theorem-w-back-ref}
\begin{proof}
Similar to the proof of Theorem \ref{theorem:symmetric-config-overall-2d-size-final}. \hfill $\Box$
\end{proof}


\setcounter{theorem}{32}

\begin{corollary}
\label{cor:symmetric-config-2d-active-cell-size-2}
The number of binary symmetric configurations ($|\Sigma| = 2$)  with $k$ sites in state $a$ is given by
\begin{align*}
|S^a_{n \times n, k}| = \sum_{{\bf 0} \lhd {\bf v} \unlhd {\bf 2}}
(-1)^{1 + |{\bf v}|}
\binom{\frac{n^2}{h({\bf v})}}{\frac{k}{h({\bf v})}} \prod_{i=1}^{\omega(d)}r(i). \mbox{\hspace{40pt} $\Box$} 
\end{align*}
\end{corollary}


\begin{corollary}
\label{cor:symmetric-config-2d-active-cell-zero-size}
For any state set $\Sigma$ and state $a \in \Sigma$, the set $S^a_{n \times n,0}$ equals the set $S_{n \times n}$ for the state set $\Sigma \setminus \{ a \}$. \hfill $\Box$
\end{corollary}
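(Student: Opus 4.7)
The plan is to unwind both sides through their definitions and to observe that shift-symmetry is a property depending only on equalities between cell states, not on the ambient alphabet in which those states live.

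First, I would use Definition \ref{def:symmetric-config-2d-active-cell} to write
$$S^a_{n \times n, 0} = S_{n \times n} \cap D^a_{n \times n, 0}.$$
The condition that exactly zero cells are in state $a$ means simply that no coordinate of $\mathbf{s}$ takes the value $a$, so as a set of configurations $D^a_{n \times n, 0} = (\Sigma \setminus \{a\})^{n \times n}$.

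Second, I would verify that shift-symmetry is alphabet-blind. Definitions \ref{def:symmetric-config-set-2d-l1_l2} and \ref{def:symmetric-config-set-2d} invoke only equalities of the form $s_{\mathbf{u}} = s_{\mathbf{u} \oplus \mathbf{v}}$, so whether a configuration $\mathbf{s} \in (\Sigma \setminus \{a\})^{n \times n}$ is shift-symmetric depends solely on the pattern of equalities among its coordinates, and not on whether we regard its values as lying in $\Sigma$ or in $\Sigma \setminus \{a\}$. Consequently, for every non-zero $\mathbf{v}$ the intersection of the $\mathbf{v}$-shift-symmetric configurations over $\Sigma$ with $(\Sigma \setminus \{a\})^{n \times n}$ coincides with the $\mathbf{v}$-shift-symmetric configurations over $\Sigma \setminus \{a\}$. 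Taking the union over all non-zero $\mathbf{v}$ via Definition \ref{def:symmetric-config-set-2d} then yields that $S_{n \times n} \cap (\Sigma \setminus \{a\})^{n \times n}$ (computed over $\Sigma$) equals $S_{n \times n}$ computed over $\Sigma \setminus \{a\}$.

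Combining the two steps gives the claimed identity. There is no substantive obstacle here: the argument is essentially a chase through the definitions. The only subtlety worth flagging is keeping the ambient alphabet explicit in the notation, so that one does not conflate $S_{n \times n}$ over $\Sigma$ with $S_{n \times n}$ over $\Sigma \setminus \{a\}$ when unwinding. As a sanity check one could compare cardinalities using Corollary \ref{cor:symmetric-config-set-2d-set-size}: replacing $|\Sigma|$ with $|\Sigma|-1$ in the enumeration formulas recovers $|S^a_{n \times n, 0}|$, consistent with Lemma \ref{lemma:symmetric-config-2d-active-cell-intersect-size} specialized to $k=0$; but since the corollary asserts set equality rather than merely equinumerosity, this check is optional.
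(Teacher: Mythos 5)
Your argument is correct and is precisely the definitional unwinding the paper intends: the paper states this corollary without proof (treating it as immediate), and your two steps --- identifying $D^a_{n\times n,0}$ with $(\Sigma\setminus\{a\})^{n\times n}$ and observing that the shift-symmetry conditions $s_{\mathbf{u}}=s_{\mathbf{u}\oplus\mathbf{v}}$ are insensitive to the ambient alphabet --- supply exactly the missing justification. No gaps; the remark about keeping the ambient alphabet explicit is the only subtlety, and you handle it.
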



\setcounter{theorem}{34}

\begin{theorem-w-back-ref}[theorem:detection-algorithm-worst-case-complexity]
The worst-case time complexity of the shift-symmetry detection algorithm for a square configuration of size $N = n^2$ is

$$O(n^3).$$

\end{theorem-w-back-ref}
\begin{proof}
In a worst-case scenario, when a configuration is non-shift-symmetric and there is only one cell breaking symmetry, each test requires to visit potentially all $n^2$ cells. The overall worst-case time complexity is therefore $O(|G_n| n^2)$. We know that the sum of distinct prime factors ${\rm sopf}(n) = \sum_{i = 1}^{\omega(n)} p_i$ also known as the integer logarithm is at most $n$ (if $n$ is prime), which gives us 

\begin{align*}
O(|G_n|n^2) &= O\left((\omega(n) + \sum_{i = 1}^{\omega(n)} p_i)n^2\right)\\
&= O((\log_2(n) + {\rm sopf}(n))n^2)\\
&= O(n^3).\mbox{\hspace{145pt} $\Box$}
\end{align*}
\end{proof}

\begin{theorem-w-back-ref}[theorem:detection-algorithm-average-case-complexity]
The average-case time complexity of the shift-symmetry detection algorithm for a square configuration of size $N = n^2$ generated from a uniform distribution is

$$O(n^2).$$

\end{theorem-w-back-ref}
\begin{proof}
Let $m = n^2 p^{-1}$ be the number of orbits for a prime $p$. Assuming a uniform distribution the probability of passing an orbit is $Q = |\Sigma|^{1 - p}$. If successful we move to a next orbit, otherwise we terminate with the probability $1 - Q$. The probability of terminating at $i$\textsuperscript{th} orbit can be therefore generalized as

$$ P_i = \begin{cases} 
    (1 - Q) Q^{i-1} &\mbox{if } i < m \\
    Q^{m - 1} & \mbox{if } i = m.
\end{cases}
$$

It is easy to show that these probabilities sum to $1$, i.e., we must terminate at one of $m$ orbits. Further, the probability of successfully passing the test for all the orbits---the probability that a configuration generated from a uniform distribution is shift-symmetric by a vector with an order $p$---equals $|\Sigma|^{n^2(p^{-1} - 1)}$.

By using the formula for a geometric sum we can prove that 


$$\sum_{i = 0}^{n - 1}(i + 1)r^i = \frac{1 - r^n(1 + n(1 - r))}{(1 - r)^2}.$$ 

We apply this to calculate the expected number of visited orbits as

\begin{align*}
E_p[\#orbits] &= \sum_{i = 1}^m iP_i \\
&= (1 - Q) \sum_{i = 0}^{m - 2}(i + 1)Q^i + mQ^{m - 1} \\
& = \frac{1 - Q^{m - 1}(m - Qm + Q) + (1 - Q)mQ^{m - 1}}{1 - Q}\\
& = \frac{1 - Q^{m}}{1 - Q}.
\end{align*}


Owing to $Q < 1$ we can bound the expected (average) number of visited orbits for a prime $p$ as

\begin{align*}
E_p[\#orbits] \leq (1 - Q)^{-1} = (1 - |\Sigma|^{1 - p})^{-1}.
\end{align*}

Each $p$-orbit contains $p$ cells and so the expected number of visited cells is simply

\begin{align*}
E_p[\#cells] \leq 2p(1 - |\Sigma|^{1 - p})^{-1}.
\end{align*}

Note that while moving from one orbit to a next one we can potentially revisit some cells, however, because the order is fixed we can visit each cell at most twice.

The number of generators $|G_n(p_i)|$ for each prime $p_i$ equals $p_i + 1$ (Eq. \ref{def:gen-sets}), thus the overall expected number of visited cells, i.e., the average-case time complexity in O-notation is

\begin{align*}
\sum_{i = 1}^{\omega(n)} (p_i + 1) p_i(1 - |\Sigma|^{1 - p_i})^{-1}.
\end{align*}

Since the expression $(1 - |\Sigma|^{1 - p_i})^{-1}$ is at most $2$ ($p_i \geq 2$) and the integer logarithm ${\rm sopf}(n)$ is at most $n$, the average-case time complexity of the shift-symmetry test is

\begin{align*}
O(\sum_{i = 1}^{\omega(n)} p_i^2 + \sum_{i = 1}^{\omega(n)} p_i) &= O({\rm sopf}^2(n) + {\rm sopf}(n)) \\
&= O(n^2).\mbox{\hspace{110pt} $\Box$}
\end{align*}
\end{proof}

\clearpage

\section{Examples}
\label{appendix:examples}

\begin{example}
Let $n = 2^{\alpha_1}3^{\alpha_2}$, then using counting from Lemma
\ref{lemma:symmetric-config-overall-2d-size}, $|S_{n \times n}| = $ 
\begin{align*}
& \binom{3}{1}|\Sigma|^{\frac{n^2}{2}} + \binom{4}{1}|\Sigma|^{\frac{n^2}{3}}\\
- &\binom{3}{2}|\Sigma|^{\frac{n^2}{2^2}} - \binom{3}{1}\binom{4}{1}|\Sigma|^{\frac{n^2}{2\,3}} - \binom{4}{2}|\Sigma|^{\frac{n^2}{3^2}}\\
+ & \binom{3}{3}|\Sigma|^{\frac{n^2}{2^2}} + \binom{3}{2}\binom{4}{1}|\Sigma|^{\frac{n^2}{2^2\,3}} + \binom{3}{1}\binom{4}{2}|\Sigma|^{\frac{n^2}{2\, 3^2}}\\
& \hspace{123pt} + \binom{4}{3}|\Sigma|^{\frac{n^2}{3^2}}\\
- & \binom{3}{3}\binom{4}{1}|\Sigma|^{\frac{n^2}{2^2\,3}} - \binom{3}{2}\binom{4}{2}|\Sigma|^{\frac{n^2}{2^2\, 3^2}} - \binom{3}{1}\binom{4}{3}|\Sigma|^{\frac{n^2}{2\, 3^2}}\\
& \hspace{150pt} - \binom{4}{4}|\Sigma|^{\frac{n^2}{3^2}}\\
+ & \binom{3}{3}\binom{4}{2}|\Sigma|^{\frac{n^2}{2^2\,3^2}} + \binom{3}{2}\binom{4}{3}|\Sigma|^{\frac{n^2}{2^2\, 3^2}} + \binom{3}{1}\binom{4}{4}|\Sigma|^{\frac{n^2}{2\, 3^2}}\\
- &\binom{3}{3}\binom{4}{3}|\Sigma|^{\frac{n^2}{2^2\,3^2}} - \binom{3}{2}\binom{4}{4}|\Sigma|^{\frac{n^2}{2^2\, 3^2}}\\
+ & \binom{3}{3}\binom{4}{4}|\Sigma|^{\frac{n^2}{2^2\,3^2}}
\end{align*}
by Lemma \ref{lemma:symmetric-config-overall-2d-size-alternative}, $|S_{n \times n}| = $\\
\begin{align*}
& |\Sigma|^{\frac{n^2}{2}}\biggl[+\binom{3}{1}\biggl] + \\
& |\Sigma|^{\frac{n^2}{3}}\biggl[+\binom{4}{1}\biggl] + \\
& |\Sigma|^{\frac{n^2}{2\,3}}\biggl[-\binom{3}{1}\binom{4}{1}\biggl] + \\
& |\Sigma|^{\frac{n^2}{2^2}}\biggl[- \binom{3}{2}+ \binom{3}{3}\biggl] + \\
& |\Sigma|^{\frac{n^2}{3^2}}\biggl[- \binom{4}{2} + \binom{4}{3} - \binom{4}{4} \biggl] +\\
& |\Sigma|^{\frac{n^2}{2^2\,3}}\biggl[+ \binom{3}{2}\binom{4}{1} - \binom{3}{3}\binom{4}{1} \biggl] +\\
& |\Sigma|^{\frac{n^2}{2\, 3^2}}\biggl[+ \binom{3}{1}\binom{4}{2} - \binom{3}{1}\binom{4}{3} + \binom{3}{1}\binom{4}{4} \biggl] +\\
& |\Sigma|^{\frac{n^2}{2^2\,3^2}}\biggl[ - \binom{3}{2}\binom{4}{2} + \binom{3}{3}\binom{4}{2} + \binom{3}{2}\binom{4}{3}\\
& \hspace{34pt} - \binom{3}{3}\binom{4}{3} - \binom{3}{2}\binom{4}{4} + \binom{3}{3}\binom{4}{4} \biggl]
\end{align*}
and finally by Theorem \ref{theorem:symmetric-config-overall-2d-size-final}, $|S_{n \times n}| = $\\
\begin{align*}
& |\Sigma|^{\frac{n^2}{2}}3 + |\Sigma|^{\frac{n^2}{3}}4 - |\Sigma|^{\frac{n^2}{2\,3}}3 \cdot 4 - |\Sigma|^{\frac{n^2}{2^2}}2 - |\Sigma|^{\frac{n^2}{3^2}} 3 +\\
& |\Sigma|^{\frac{n^2}{2^2\,3}}2 \cdot 4  + |\Sigma|^{\frac{n^2}{2\, 3^2}}3 \cdot 3- |\Sigma|^{\frac{n^2}{2^2\,3^2}}2 \cdot 3
\end{align*}
\end{example}

\begin{example}
Let $n = 2^{\alpha_1}3^{\alpha_2}$, $a \in \Sigma$, and $k = 2^{\beta_1}3^{\beta_2}$, where $\beta_1 \leq \alpha_1, \beta_2 \leq \alpha_2$, and $\sigma = |\Sigma| - 1$. Then using counting from Lemma \ref{lemma:symmetric-config-2d-active-cell-size}  $|S^a_{n \times n,k}| = $
\begin{small}
\begin{align*}
& \binom{3}{1}\binom{\frac{n^2}{2}}{\frac{k}{2}}\sigma^{\frac{n^2 - k}{2}} + \binom{4}{1}\binom{\frac{n^2}{3}}{\frac{k}{3}}\sigma^{\frac{n^2 - k}{3}}\\
- &\binom{3}{2}\binom{\frac{n^2}{2^2}}{\frac{k}{2^2}}\sigma^{\frac{n^2 - k}{2^2}} - \binom{3}{1}\binom{4}{1}\binom{\frac{n^2}{2\,3}}{\frac{k}{2\,3}}\sigma^{\frac{n^2 - k}{2\,3}} - \binom{4}{2}\binom{\frac{n^2}{3^2}}{\frac{k}{3^2}}\sigma^{\frac{n^2 - k}{3^2}}\\
+ & \binom{3}{3}\binom{\frac{n^2}{2^2}}{\frac{k}{2^2}}\sigma^{\frac{n^2 - k}{2^2}} + \binom{3}{2}\binom{4}{1}\binom{\frac{n^2}{2^2 \, 3}}{\frac{k}{2^2 \, 3}}\sigma^{\frac{n^2 - k}{2^2\,3}}\\
& + \binom{3}{1}\binom{4}{2}\binom{\frac{n^2}{2 \, 3^2}}{\frac{k}{2 \, 3^2}}\sigma^{\frac{n^2 - k}{2\, 3^2}} + \binom{4}{3}\binom{\frac{n^2}{3^2}}{\frac{k}{3^2}}\sigma^{\frac{n^2 - k}{3^2}}\\
- & \binom{3}{3}\binom{4}{1}\binom{\frac{n^2}{2^2 \, 3}}{\frac{k}{2^2 \, 3}}\sigma^{\frac{n^2 - k}{2^2\,3}} - \binom{3}{2}\binom{4}{2}\binom{\frac{n^2}{2^2 \, 3^2}}{\frac{k}{2^2 \, 3^2}}\sigma^{\frac{n^2 - k}{2^2\, 3^2}}\\
& - \binom{3}{1}\binom{4}{3}\binom{\frac{n^2}{2 \, 3^2}}{\frac{k}{2 \, 3^2}}\sigma^{\frac{n^2 - k}{2\, 3^2}} - \binom{4}{4}\binom{\frac{n^2}{3^2}}{\frac{k}{3^2}}\sigma^{\frac{n^2 - k}{3^2}}\\
+ & \binom{3}{3}\binom{4}{2}\binom{\frac{n^2}{2^2 \, 3^2}}{\frac{k}{2^2 \, 3^2}}\sigma^{\frac{n^2 - k}{2^2\,3^2}} + \binom{3}{2}\binom{4}{3}\binom{\frac{n^2}{2^2 \, 3^2}}{\frac{k}{2^2 \, 3^2}}\sigma^{\frac{n^2 - k}{2^2\, 3^2}}\\
& + \binom{3}{1}\binom{4}{4}\binom{\frac{n^2}{2 \, 3^2}}{\frac{k}{2 \, 3^2}}\sigma^{\frac{n^2 - k}{2\, 3^2}}\\
- &\binom{3}{3}\binom{4}{3}\binom{\frac{n^2}{2^2 \, 3^2}}{\frac{k}{2^2 \, 3^2}}\sigma^{\frac{n^2 - k}{2^2\,3^2}} - \binom{3}{2}\binom{4}{4}\binom{\frac{n^2}{2^2 \, 3^2}}{\frac{k}{2^2 \, 3^2}}\sigma^{\frac{n^2 - k}{2^2\, 3^2}}\\
+ & \binom{3}{3}\binom{4}{4}\binom{\frac{n^2}{2^2 \, 3^2}}{\frac{k}{2^2 \, 3^2}}\sigma^{\frac{n^2 - k}{2^2\,3^2}}
\end{align*}
\end{small}

by Lemma \ref{lemma:symmetric-config-2d-active-cell-size-alternative} $|S^a_{n \times n,k}| = $
\begin{small}
\begin{align*}
& \binom{\frac{n^2}{2}}{\frac{k}{2}}\sigma^{\frac{n^2-k}{2}}\biggl[
+\binom{3}{1}
\biggl] + \\
& \binom{\frac{n^2}{3}}{\frac{k}{3}}\sigma^{\frac{n^2-k}{3}}\biggl[
+\binom{4}{1}
\biggl] +\\
& \binom{\frac{n^2}{2 \, 3}}{\frac{k}{2 \, 3}}\sigma^{\frac{n^2-k}{2\,3}}\biggl[
-\binom{3}{1}\binom{4}{1}
\biggl] + \\
& \binom{\frac{n^2}{2^2}}{\frac{k}{2^2}}\sigma^{\frac{n^2-k}{2^2}}\biggl[
- \binom{3}{2}
+ \binom{3}{3}
\biggl] +\\
& \binom{\frac{n^2}{3^2}}{\frac{k}{3^2}}\sigma^{\frac{n^2-k}{3^2}}\biggl[
- \binom{4}{2}
+ \binom{4}{3}
- \binom{4}{4}
\biggl] +\\
& \binom{\frac{n^2}{2^2 \, 3}}{\frac{k}{2^2 \, 3}}\sigma^{\frac{n^2-k}{2^2\,3}}\biggl[
+ \binom{3}{2}\binom{4}{1}
- \binom{3}{3}\binom{4}{1}
\biggl] +\\
& \binom{\frac{n^2}{2 \, 3^2}}{\frac{k}{2 \, 3^2}}\sigma^{\frac{n^2-k}{2\, 3^2}}\biggl[
+ \binom{3}{1}\binom{4}{2}
- \binom{3}{1}\binom{4}{3}
+ \binom{3}{1}\binom{4}{4}
\biggl] +\\
& \binom{\frac{n^2}{2^2 \, 3^2}}{\frac{k}{2^2 \, 3^2}}\sigma^{\frac{n^2-k}{2^2\,3^2}}\biggl[
- \binom{3}{2}\binom{4}{2}
+ \binom{3}{3}\binom{4}{2}
+ \binom{3}{2}\binom{4}{3}\\
& \hspace{85pt} - \binom{3}{3}\binom{4}{3}
- \binom{3}{2}\binom{4}{4}
+ \binom{3}{3}\binom{4}{4} \biggl]
\end{align*}
\end{small}

and finally by Theorem \ref{theorem:symmetric-config-2d-active-cell-size-final}, $|S^a_{n \times n,k}| = $
\begin{align*}
& \binom{\frac{n^2}{2}}{\frac{k}{2}}\sigma^{\frac{n^2-k}{2}}3 + \binom{\frac{n^2}{3}}{\frac{k}{3}}\sigma^{\frac{n^2-k}{3}}4 - \binom{\frac{n^2}{2 \, 3}}{\frac{k}{2 \, 3}}\sigma^{\frac{n^2-k}{2\,3}}3 \cdot 4\\
-& \binom{\frac{n^2}{2^2}}{\frac{k}{2^2}}\sigma^{\frac{n^2-k}{2^2}}2 - \binom{\frac{n^2}{3^2}}{\frac{k}{3^2}}\sigma^{\frac{n^2-k}{3^2}}3 + \binom{\frac{n^2}{2^2 \, 3}}{\frac{k}{2^2 \, 3}}\sigma^{\frac{n^2-k}{2^2\,3}}2 \cdot 4\\
+& \binom{\frac{n^2}{2 \, 3^2}}{\frac{k}{2 \, 3^2}}\sigma^{\frac{n^2-k}{2\, 3^2}}3 \cdot 3 - \binom{\frac{n^2}{2^2 \, 3^2}}{\frac{k}{2^2 \, 3^2}}\sigma^{\frac{n^2-k}{2^2\,3^2}}2 \cdot 3
\end{align*}
\end{example}

\clearpage

\section*{References}

\bibliography{common}
\bibliographystyle{aipauth4-1}

\end{document}